\title{Maximum Matching in Semi-streaming with Few Passes\thanks{Most of the work was done while Christian Konrad was a PhD student at LIAFA, Universit\'{e} Paris Diderot. Fr\'{e}d\'{e}ric Magniez is supported by the French ANR Blanc project ANR-12-BS02-005 (RDAM). 
Claire Mathieu is supported by NSF Grant CCF-0964037.}}
 \author{Christian Konrad \inst{1} \and Fr\'{e}d\'{e}ric Magniez \inst{2} \and Claire Mathieu \inst{3}} 
\institute{Reykjavik University, Reykjavik, Iceland \\ \email{christiank@ru.is} \and CNRS, LIAFA, Universit\'{e} Paris Diderot, Sorbonne Paris-Cit\'{e}, 75205 Paris, France \\ \texttt{magniez@cnrs.fr} \and CNRS, D\'{e}partement d'Informatique UMR CNRS 8548, \'{E}cole Normale Sup\'{e}rieure, Paris, France \\ \email{claire@cs.brown.edu}}
\newcommand{\mm}{\textsc{Maximum Matching}}
\newcommand{\mbm}{\textsc{Maximum Bipartite Matching}}
\newcommand{\mbmshort}{\textsc{MBM}}
\newcommand{\mmshort}{\textsc{MM}}
\newcommand{\OPT}{\mathrm{opt}}
\newcommand{\Greedy}{\mathrm{Greedy}}
\newcommand{\MM}{\OPT}
\DeclareMathOperator*{\Exp}{\mathbb{E}}
\newcommand{\Order}{O} 
\DeclareMathOperator*{\polylog}{polylog}
\DeclareMathOperator*{\argmin}{arg\,min}
\begin{document}

\maketitle

\begin{abstract}
In the {\em semi-streaming model}, 
an algorithm receives a stream of edges of a graph in arbitrary order and uses a memory of 
size $\Order(n \polylog n)$, where $n$ is the number of vertices of a graph. 
In this work, we present semi-streaming algorithms 
that perform one or two passes over the input stream for \mm{} with no restrictions on the input graph, 
and for the important special case of bipartite graphs that we refer to as \mbm{}. 
The Greedy matching algorithm performs one pass over the input and outputs a $1/2$ approximation. 
Whether there is a better one-pass algorithm has been an open question since the appearance of the 
first paper on streaming algorithms for matching problems in 2005 [Feigenbaum et al., SODA 2005]. 
We make the following progress on this problem:

In the one-pass setting, we show that there is a deterministic semi-streaming algorithm for \mbm{}
with expected approximation factor $1/2+0.005$, 
assuming that edges arrive one by one in (uniform) random order. We extend this algorithm to general 
graphs, and we obtain a $1/2+0.003$ approximation for \mm.

In the two-pass setting, we do not require the random arrival order assumption (the edge stream 
is in arbitrary order). We present a 
simple randomized two-pass semi-streaming algorithm for \mbm{} with expected approximation factor $1/2 + 0.019$.
Furthermore, we discuss a more involved deterministic two-pass semi-streaming algorithm for \mbm{} 
with approximation factor $1/2 + 0.019$ and a generalization of this algorithm to general graphs with
approximation factor $1/2 + 0.0071$.
\end{abstract}


\section{Introduction}
{\bf Streaming.} Classical algorithms assume random access to the input. This is a reasonable assumption until one 
is faced with massive data sets as for instance in bioinformatics for genome decoding, Web databases for the search of 
documents, or network monitoring. The input may then be too large to fit into the computer's memory. Another typical situation 
is a continuous flow of traffic logs sent to a router. Streaming algorithms sequentially scan the input piece by piece 
while using sublinear memory space. The analysis of Internet traffic \cite{ams99} 
was  one of the first applications of such algorithms. A similar but slightly different situation arises when the input is 
recorded on an external storage device like optical disks or hard drives where random access is too costly and hence only sequential 
access is possible. Then a small number of passes, ideally constant, can be performed.

By sublinear memory one ideally means memory that is polylogarithmic in the size of the input. However, polylogarithmic memory 
is too restrictive for many graph problems: as shown in \cite{fkmsz052}, deciding basic graph properties such as bipartiteness 
or connectivity of an $n$-vertex graph already requires $\Omega(n)$ space. Muthukrishnan~\cite{mut05} suggests to study massive
graphs in a {\em semi-external} model, that is, not the entire graph but the vertex set can be stored 
in memory. In that model, a graph is given by a stream of edges arriving in arbitrary order. A {\em semi-streaming} algorithm has 
memory $\Order(n \polylog n)$, and the graph vertices are usually known before processing the stream of edges.

{\bf Matchings.} In this paper, we focus on an iconic graph problem: finding large matchings. In the semi-streaming model, the problem 
was primarily addressed by Feigenbaum, Kannan, McGregor, Suri and Zhang~\cite{fkmsz05}. Currently, a variety of semi-streaming 
matching algorithms for particular settings exist (unweighted/weighted, bipartite/general graphs). Most works consider the 
multipass scenario \cite{ag11,eks09} where the goal is to find a $(1-\epsilon)$ approximation while minimizing the number of passes. 
The techniques are based on finding augmenting paths, and, recently, linear programming was also applied \cite{ag11}. Ahn and
Guha \cite{ag11} provide an overview of the current best algorithms. 

In this work, we focus on semi-streaming algorithms that perform one or two passes.
In the one-pass setting, in the unweighted case, the greedy matching algorithm is still the best known algorithm
(We note that there are algorithmic results in the weighted case  \cite{fkmsz05}, \cite{m05}, \cite{z08} \cite{elms10},
 but when the edges are unweighted those algorithms are of no help.).
The greedy matching algorithm constructs a matching in the following online fashion: starting  with an empty matching $M$, 
upon arrival of edge $e$, it adds $e$ to $M$ if $M \cup \{e \}$ remains a matching. A \textit{maximal matching} is a matching that 
 can not be enlarged by adding another edge to it. It is well-known that 
the cardinality of maximal matchings is at least half of the cardinality of maximum 
matchings. By construction, since the greedy matching is maximal, 
$M$ is a $(1/2)$-approximation of any maximum matching $M^*$, that is $|M|\geq |M^*|/2$.
The starting point of this paper is to address the following question: 

{\bf Is the greedy matching algorithm best possible, or is there a semi-streaming algorithm with an approximation ratio better than $1/2$? } 

This is an open question at least since the publication of the first paper \cite{fkmsz05} on computing matchings in the 
semi-streaming in 2005. 
On the negative side, Michael Kapralov showed in \cite{kap13} that there is no semi-streaming algorithm for maximum matching
(even for bipartite graphs) with approximation factor asymptotically better than $1 - 1/e$. 
This, however, still leaves room between $1/2$ and $1-1/e$.
To get an approximation ratio better than $1/2$, prior multipass semi-streaming algorithms 
require at least $3$ passes, for instance the algorithm of \cite{eks09} can be
used to run in $3$ passes providing a matching strictly better than a $(1/2)$-approximation.

{\bf Random order of edge arrivals.}
The behavior of the greedy matching algorithm has been extensively studied in a variety of settings. The most relevant 
reference~\cite{df91} considers a (uniform) random order of edge arrivals. In that setting, Dyer and Frieze showed that
the expected approximation ratio is still $1/2$ for some graphs, but can be better for particular graph classes such as 
planar graphs and forests.

In the context of streaming and semi-streaming algorithms, the model of random order arrival has first been studied for the 
problems of sorting and selecting in limited space by Munro and Paterson~\cite{mp80}. Guha and McGregor~\cite{gm09} gave an 
exponential separation between random order and adversarial order models. Recently, Kapralov, Khanna and Sudan showed in \cite{kks14}
that under the random order arrival assumption, the size of a maximum matching can be approximated within a constant
factor using only polylogarithmic space. One justification of the random order model is to 
understand why certain problems do not admit a memory efficient streaming algorithms in theory, while in practice, heuristics 
are often sufficient.

{\bf Other related work.}
Maximum bipartite matching was also intensively studied in the online setting, where nodes from one side arrive in adversarial 
order together with all their incident edges. In this model, the decision to take or discard an edge has to be taken before
accessing the edges of the next vertex. The well-known randomized algorithm by Karp Vazirani and Vazirani~\cite{kvv90} (KVV algorithm) achieves 
an approximation ratio of $1-1/e$ for bipartite graphs where all nodes from one side are known in advance, the nodes from the 
other side arrive online. They prove that their bound is optimal in the worst case. This barrier was broken only recently by modifying 
the worst case assumption (worst input graph and worst arrival order) to assume that, although the graph itself is worst-case, 
the arrival order is according to some (known or unknown) distribution~\cite{kmt11,my11}.

The online model for bipartite matching carries over to the streaming model. In the so-called {\em vertex arrival order} model,
the input edge stream is sorted with respect to the vertices of one bipartition \cite{gkk12,kap13}. The KVV algorithm can also be seen
as a streaming algorithm when the incoming edge sequence is in vertex arrival order. Surprisingly, it turns out that the KVV 
algorithm is {\em optimal}, that is, no semi-streaming algorithm for \mbm{} can achieve an approximation factor
better than $1-1/e$ \cite{kap13}. Goel, Kapralov and Khanna showed in \cite{gkk12} that there is a deterministic
counterpart to the KVV algorithm in the semi-streaming model that achieves the same approximation factor.
This separates the online setting from the vertex arrival order setting
in streaming since it is well-known that any deterministic online algorithm for \mbm{} cannot achieve an approximation ratio
better than $1/2$.


{\bf Our results.} 
In this paper, we present semi-streaming algorithms for maximum matching for bipartite graphs and general graphs 
with approximation factor strictly larger than $1/2$. 
Our algorithms make one or two passes over the input.
Our one-pass semi-streaming algorithm for bipartite graphs is deterministic and achieves an expected approximation 
ratio $1/2+0.005$ for any graph (\textbf{Theorem~\ref{theorem:one-pass-bip}}) assuming
that the edges arrive one by one in (uniform) random order. 
Furthermore, we modify the latter algorithm in order to obtain an algorithm for general graphs, 
and we obtain an approximation ratio of $1/2+0.003$ (\textbf{Theorem~\ref{theorem:one-pass-gen}}).

Our two-pass semi-streaming algorithm do not need the random order
assumption. We present a randomized two-pass algorithm with expected
approximation ratio $1/2+0.019$ against its internal random 
coin flips, for any bipartite graph and for any arrival order (\textbf{Theorem~\ref{l_theorem2}}).
We achieve the same approximation ratio with a more involved deterministic semi-streaming algorithm
(\textbf{Theorem~\ref{theorem:two-pass-det-bip}}). Last, we extend the 
latter algorithm to general graphs and we obtain an approximation ratio of 
$\frac{1}{2} + 0.0071$ (\textbf{Theorem~\ref{theorem:two-pass-det-gen}}). Figure~\ref{fig:results}
provides an overview of our algorithms.

\begin{figure}[ht]
\small
\begin{center}
 \begin{tabularx}{\textwidth}{|clX|}
\hline
Bipartite/General Graphs & Deterministic/Randomized & Approximation Factor \\
\hline
\multicolumn{3}{|l|}{\textbf{$1$ pass, uniform random order:}} \\
bipartite & deterministic & $\frac{1}{2} + 0.005$ (\textbf{Theorem~\ref{theorem:one-pass-bip}}) \\
general & deterministic & $\frac{1}{2} + 0.003$ (\textbf{Theorem~\ref{theorem:one-pass-gen}}) \\
\multicolumn{3}{|l|}{\textbf{$2$ passes, arbitrary order:}} \\
bipartite & randomized & $\frac{1}{2} + 0.019$ (\textbf{Theorem~\ref{l_theorem2}}) \\
bipartite & deterministic &  $\frac{1}{2} + 0.019$ (\textbf{Theorem~\ref{theorem:two-pass-det-bip}}) \\
general & deterministic & $\frac{1}{2} + 0.0071$ (\textbf{Theorem~\ref{theorem:two-pass-det-gen}}) \\
\hline
\end{tabularx}
\caption{Overview of our semi-streaming algorithms for maximum matching. \label{fig:results}}
\end{center}
\end{figure}

{\bf Techniques.}
The one-pass algorithms as well as the randomized 
two-pass algorithm each apply three times
the greedy matching algorithm on different and carefully chosen subgraphs. The deterministic two-pass 
algorithms are more complicated as they use subroutines that compute particular subsets of edges 
besides the greedy algorithm. There is a general idea that is common to all our algorithms that we
are going to explain for bipartite graphs:

If we had three passes 
at our disposal (see for instance Algorithm~2 in \cite{fkmsz05}), we could use one pass to build 
a maximal matching $M_0$ between the two sides $A$ and $B$ of the bipartition, a second pass to 
find a matching $M_1$ between the $A$ vertices matched in $M_0$ and the $B$ vertices that are free
with respect to $M_0$ whose combination with edges of $M_0$ forms paths of length 2. Finally, a third 
pass to find a matching $M_2$ between $B$ vertices matched in $M_0$ and $A$ vertices 
that are free with respect to $M_0$ whose combination with $M_0$ and $M_1$ forms paths of length 3 that 
can be used to augment the matching $M_0$. All our algorithms simulate these $3$ passes in less passes.

\textit{One-pass algorithm for random arrival order:} 
To simulate this with a single pass, we split 
the sequence of arrivals $[1,m]$ into three phases $[1,\alpha m]$, $(\alpha m, \beta m]$, 
and $(\beta m, m]$ for $0 < \alpha < \beta < 1$, and we build $M_0$ during the first phase, $M_1$ during the second phase,
and $M_2$ during the third phase. Of course, we see only a subset of the edges for each phase, but thanks to 
the random order arrival, these subsets are random, and, intuitively, we loose 
only a constant fraction in the sizes of the constructed matchings. As it turns out, the
intuition can be made rigorous, as long as the first matching $M_0$ is maximal or close to maximal.   
We observe that, if the greedy algorithm, executed on the entire sequence of edges, produces a 
matching that is not much better than a $1/2$ approximation of the optimal maximum matching, 
then that matching {\em is built early on}. More precisely (Lemma~\ref{lemma:exp}), if the greedy 
matching on the entire graphs is no better than a $1/2+\epsilon$ approximation, then 
after seeing a mere one third of the edges of the graph, the greedy matching is already a $1/2 -\epsilon$ approximation, so it is already close to maximal. 

\textit{Randomized two-pass algorithm for any arrival order:}
Assume a bipartite graph $(A,B,E)$ comprising a perfect matching. If $A'$ is a small random 
subset of $A$, then, regardless of the arrival order, the greedy algorithm that constructs a greedy matching between $A'$ and $B$
 (that is, the greedy algorithm restricted to the edges that have an endpoint in $A'$) will find a 
matching that is near-perfect, that is, almost every vertex of $A'$ is matched (see Theorem~\ref{theorem:subset-match} for a slightly more
general version of this statement). This property of the greedy algorithm may be of independent interest.
Then, in one pass we compute a greedy matching $M_0$ and also via the greedy algorithm independently and in parallel a matching $M_1$
between a subset $A' \subset A$ and the $B$ vertices. It turns out that $M_0 \cup M_1$ comprise some length $2$
paths that can be completed to $3$-augmenting paths by a third matching $M_2$ that we compute in the second pass.

\textit{Deterministic two-pass algorithm for any arrival order:}
Again, assume a bipartite graph $(A, B, E)$ comprising a perfect matching and some integer $\lambda$. 
Add now greedily edges $ab$ to a set $S$ if the degree of $a$ in $S$ is yet $0$, and the degree
of $b$ is smaller than $\lambda$. This algorithm computes an \textit{incomplete semi-matching} with
a degree limitation $\lambda$ on the $B$ nodes and is also used in \cite{kr13}. In the first pass, we run this algorithm in parallel to the
greedy matching algorithm for constructing $M_0$. $S$ replaces the computation of $M_1$, and we will see that 
there are length $2$ paths in $M_0 \cup S$ that can be completed to $3$-augmenting paths in the second pass via a further greedy matching $M_2$.
  

{\bf Extension to general graphs.} The deterministic one-pass algorithm for bipartite graphs
and the deterministic two-pass algorithm for bipartite graphs both extend to general graphs.
When searching for augmenting paths in general graphs, algorithms
have to cope with the fact that a candidate edge for an augmenting path may form an undesired
triangle with the edge to augment and an optimal edge. In this case, the candidate edge 
can block the entire augmenting path. McGregor \cite{m05} overcomes this problem by 
repeatedly sampling bipartite graphs from the general graph. Such a strategy, however, is not necessary
for our one-pass algorithm. Since the input sequence is in uniform random order, 
we can show that undesired triangles simply do not appear \textit{too often} allowing our techniques to still work.
For our deterministic two-pass algorithm, a combinatorial argument is used to bound the 
number of those \textit{bad} triangles. 

{\bf Conference Version.} This work builds on the article \cite{kmm12} that was presented at the 
15th International Workshop on Approximation Algorithms for Combinatorial Optimization Problems (APPROX 2012).
Besides a more detailed presentation of the results of \cite{kmm12}, the extensions of the algorithms for
bipartite graphs to general graphs are discussed. 

{\bf Outline.} We start our presentation with notations and definitions in Section~\ref{sec:prelim}. 
In Section~\ref{section:matching-3-pass}, we discuss a well-known result that is reused in all following sections. 
We point out how the Greedy matching algorithm can be used in $3$ passes to obtain an approximation ratio 
strictly larger than $1/2$. 
In Section~\ref{section:matching-one-pass}, we discuss the one-pass algorithm for bipartite graphs
and its extension to general graphs. Then, in Section~\ref{section:matching-two-pass-randomized} we
present our randomized two-pass algorithm for bipartite graphs. Finally, we conclude with the
discussion of our deterministic two-pass algorithms for bipartite graphs and general graphs in 
Section~\ref{section:matching-two-pass-det}.

\section{Preliminaries} \label{sec:prelim}

Let $G=(V, E)$ be a graph with vertex set $V$ and edge set $E$. If $G$ is bipartite with bipartitions $A$ and $B$ then 
we write $G = (A, B, E)$ and we denote $V = A \cup B$. Let $n=|V|$ and $m=|E|$.
For an edge $e \in E$ with end points $u,v\in V$, we denote
$e$ by $uv$. For a subset of edges $S \subseteq E$ and a vertex $v \in V$, we write $\deg_S(v)$
for the degree of $v$ in $S$, meaning the number of edges in $S$ that have $v$ as one of its endpoints.

We define now matchings, maximum matchings and maximal matchings.

\begin{definition}[Matching]
A {\em matching} in a graph $G = (V, E)$ is a subset of edges $M \subseteq E$ such that 
$\forall v \in V: \deg_M(v) \le 1$.  
A {\em maximum matching} $M^*$ is a matching such that for 
any other matching $M': |M^*| \ge |M'|$. 
A {\em maximal matching} $M$ is a matching that is 
inclusion-wise maximal, a.e. $\forall e \in E \setminus M: M \cup \{ e\}$ is not a matching.
\end{definition}

The \mbm~problem consists of computing a maximum matching in a bipartite graph and we abbreviate it
by \mbmshort. 

The \mm~problem consists of computing a maximum matching in a general graph and we abbreviate it
by \mmshort.

For a subset of edges $F \subseteq E$, we denote by $\OPT(F)$ a maximum matching in the graph  
$G$ restricted to edges $F$. 
We may write $\OPT(G)$ for $\OPT(E)$, and $M^*$ for $\OPT(G)$. 
For a set of vertices $S$ and a set of edges $F$, let $S(F)$
be the subset of vertices of $S$ covered by $F$. 
Furthermore, we use the abbreviation 
$\overline{S(F)} := S \setminus S(F)$.
For $S \subseteq V$, we write $\OPT(S)$ for $\OPT(G|_S)$, that is a maximum matching
in the subgraph of $G$ induced by vertices $S$. In case of bipartite graphs, for 
$S_A \subseteq A$ and $S_B \subseteq B$ we write $\OPT(S_A, S_B)$ for $\OPT(G|_{S_A \cup S_B})$.
Moreover, for two sets $S_1, S_2$ we denote by $S_1 \oplus S_2$ 
the symmetric difference $(S_1 \setminus S_2) \cup (S_2 \setminus S_1)$ of the two sets.

A standard technique to increase the size of matchings is to search for {\em augmenting paths}. We 
define augmenting paths as follows.

\begin{definition}[Augmenting Path]
 Let $p  \ge 3$ be an odd integer. Then a \textit{length $p$ augmenting path} with respect to a matching $M$ in a graph $G=(V, E)$ 
is a path $P = (v_1, \dots, v_{p+1})$ such that $v_1, v_{p+1} \notin V(M)$ and for $i \le 1/2(p-1): v_{2i}v_{2i+1} \in M$, 
and $v_{2i-1}v_{2i} \notin M$. 
\end{definition}




An augmenting path of length $p$ ($p \ge 3, p$ odd) with respect to a matching $M$ in a graph $G=(V, E)$ is a path that
starts and ends at nodes that are not matched in $M$. We call such nodes {\em free} nodes. All internal nodes of the path
are matched in $M$, and we call these nodes {\em matched} nodes. The path alternates between edges outside $M$ and edges of $M$. 
Removing from $M$ the edges of the augmenting path that are also in $M$ and inserting into $M$ the edges outside $M$ increases
the size of $M$ by $1$. 

The input graph $G$ is given as a graph stream, i.e. as a sequence of edges arriving one by one in 
some order. Let $\Pi(G)$ be the set of all edge sequences of $G$. 
An input stream for our streaming algorithms is then an edge sequence $\pi \in \Pi(G)$. 
We write $\pi[i]$ for the $i$-th 
edge of $\pi$, and $\pi[i, j]$ for the subsequence $\pi[i] \pi[i+1] \dots \pi[j]$.
In this notation, a round bracket excludes the smallest or respectively largest
element: $\pi(i, j] = \pi[i + 1, j]$, and $\pi[i, j) = \pi[i, j-1]$.
If $i, j$ are real, $\pi[i, j] := \pi[\lfloor i \rfloor, \lfloor j \rfloor ]$, and
$\pi[i] := \pi[ \lfloor i \rfloor ]$.
Given a subset $S \subseteq V$, $\pi|_S$ is the largest subsequence of $\pi$ 
such that all edges in $\pi|_S$ are among vertices in $S$.

\begin{definition}[Semi-streaming Algorithm]
A $p(n)$-pass {\em semi-streaming algorithm} \textbf{S} on input graph $G$ with update time $t(n)$ is an
algorithm such that, for every input stream $\pi \in \Pi(G)$: 
\begin{enumerate}
\itemsep1pt
\parsep1pt
 \item \textbf{S} performs at most $p(n)$ passes on stream $\pi$, 
 \item \textbf{S} maintains a random access memory of size $\Order(n \polylog n)$, 
 \item \textbf{S} has running time $\Order(t(n))$ between two consecutive read operations from the stream.
\end{enumerate}
Furthermore, preprocessing time (the time before the first read operation) and postprocessing time (the time after 
the last read operation and the output of the result) is $\Order(t(n))$. We assume that read operations on 
any stream require constant time.
\end{definition}

We say that an algorithm $\mathbf{A}$ computes a \emph{$c$-approximation to the maximum matching problem}
if $\mathbf{A}$ outputs a matching $M$ such that $|M|\geq c \cdot |\OPT(G)|$.
We consider two potential sources of randomness: from the algorithm and from the arrival order.
Nevertheless, we will always consider worst case against the graph.
For each situation, we relax the notion of $c$-approximation so that the expected approximation ratio is $c$,
that is $\Exp |M|\geq c \cdot |\OPT(G)|$
where the expectation can be taken either over  the internal random coins of the algorithm,
or over all possible arrival orders. 

The Greedy matching algorithm is illustrated in Algorithm~\ref{algo:greedy-matching}. It is
easy to see that this algorithm can be seen as a semi-streaming algorithm with approximation
factor $1/2$ and update time $\Order(1)$.

\begin{algorithm}[H]\small
\caption{The $\Greedy$ Matching Algorithm \label{algo:greedy-matching}}
\begin{algorithmic}[1]
 \STATE $M \gets \varnothing$
 \WHILE{edge stream not empty}
  \STATE $e=v_1 v_2 \gets $ next edge in stream
  \STATE \textbf{if} {$\{v_1, v_2 \} \cap V(M) = \varnothing$} \textbf{then} $M \gets M \cup \{e \}$ \textbf{end if}
 \ENDWHILE
 \RETURN $M$
\end{algorithmic}
\end{algorithm}


\section{Three-pass Semi-Streaming Algorithm for Bipartite Graphs on Adversarial Order} \label{section:matching-3-pass}


To improve on the Greedy matching algorithm with three passes, a simple strategy is to, firstly, compute
a maximal matching $M_G$ in one pass, and then use the second and the third pass to search for $3$-augmenting paths
to augment $M_G$.

Suppose that $M_G$ is close to a $1/2$-approximation.
Then almost all edges of $M_G$ are {\em $3$-augmentable}. We say that an edge $e \in M_G$ is
$3$-augmentable if the removal of $e$ from $M$ allows the insertion of two edges $f,g \in M^* \setminus M$ into $M$. More formally,
the following lemma holds.

\begin{lemma}
\label{lemma:augmentable_edges}
Let $\epsilon \ge 0$. 
Let $M$ be a maximal matching of $G$ st. $|M| \le (\frac{1}{2} + \epsilon) |M^*|$. 
Then $M$ contains at least $ (\frac{1}{2} - 3\epsilon) |M^*|$ $\mbox{$3$-augmentable}$ edges. 
\end{lemma}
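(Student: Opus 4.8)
The plan is to work with the symmetric difference $H := (M\setminus M^*)\cup(M^*\setminus M)$ and to exhibit the $3$-augmentable edges as the middle edges of the shortest augmenting paths. Since every vertex is covered by at most one edge of $M$ and at most one edge of $M^*$, $H$ has maximum degree $2$, so it decomposes into vertex-disjoint alternating paths and (necessarily even) cycles. First I would classify the odd path components. An odd path has one more edge of one matching than of the other; if it had one more edge of $M$ than of $M^*$, both its endpoints would be uncovered by $M^*$ and it would be an augmenting path for $M^*$, contradicting the optimality of $M^*$. Hence every odd path component is an augmenting path for $M$, while even paths and cycles contain equally many $M$- and $M^*$-edges. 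Counting the surplus of $M^*$-edges over $M$-edges componentwise shows that the number of $M$-augmenting paths is exactly $k := |M^*|-|M|$. Maximality of $M$ rules out augmenting paths of length $1$ (a single edge of $M^*\setminus M$ with both endpoints free could be added to $M$), so every augmenting path has odd length at least $3$.

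Next I would read off the $3$-augmentable edges. The middle $M$-edge $ab$ of a length-$3$ augmenting path $b'\!-\!a\!-\!b\!-\!a'$ is $3$-augmentable: deleting $ab$ and inserting the two outer edges $ab',a'b\in M^*\setminus M$ (whose far endpoints $b',a'$ are free in $M$) yields a matching of size $|M|+1$. Distinct length-$3$ paths are disjoint and therefore give distinct such edges, so writing $p_j$ for the number of length-$j$ augmenting paths, the number of $3$-augmentable edges is at least $p_3$. It then remains to lower-bound $p_3$, which is the only quantitative step. An augmenting path of length $j=2i+1$ contains exactly $\frac{j-1}{2}$ edges of $M$, all lying in $M\setminus M^*$, and these edge sets are disjoint across components; hence $\sum_{j}\frac{j-1}{2}\,p_j\le |M\setminus M^*|\le |M|$. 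Since $\frac{j-1}{2}\ge 2$ for every $j\ge 5$, and $\sum_j p_j=k$, the left-hand side is at least $p_3+2\sum_{j\ge5}p_j=p_3+2(k-p_3)=2k-p_3$. Combining gives $p_3\ge 2k-|M|=2|M^*|-3|M|$, and substituting the hypothesis $|M|\le(\tfrac12+\epsilon)|M^*|$ yields $p_3\ge 2|M^*|-3(\tfrac12+\epsilon)|M^*|=(\tfrac12-3\epsilon)|M^*|$, which is exactly the claimed bound.

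I expect the only real subtlety to be the structural bookkeeping rather than any estimate: one must argue carefully that optimality of $M^*$ forbids $M^*$-augmenting paths (so that every odd component augments $M$ and their number is exactly $k$) and that maximality of $M$ forbids length-$1$ augmenting paths (so that all augmenting paths have odd length $\ge 3$). Once these two facts are in place, the convexity-type weighting ``$\frac{j-1}{2}\ge 2$ for $j\ge 5$'' does all the remaining work and no delicate calculation is needed; note in particular that the crude bound $|M\setminus M^*|\le|M|$ already suffices, so Lemma~\ref{lemma:max_opt_edges} is not required here.
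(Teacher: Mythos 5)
Your proof is correct and follows essentially the same route as the paper's: decompose $M\oplus M^*$ into alternating paths and cycles, use optimality of $M^*$ and maximality of $M$ to conclude that all odd components are $M$-augmenting paths of length at least $3$, and then apply the same weighting argument (your ``$\frac{j-1}{2}\ge 2$ for $j\ge 5$'' is exactly the paper's bound $k_{2i+1}\le \frac{i}{2}k_{2i+1}$ for $i\ge 2$, combined with $\sum_i i\,k_{2i+1}\le |M|$) to get $k_3\ge 2|M^*|-3|M|$. Your presentation just makes explicit some bookkeeping (the count of augmenting paths equals $|M^*|-|M|$, and why $M^*$-augmenting paths cannot occur) that the paper states more tersely.
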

\begin{proof}
The proof is folklore. Let $k_i$ denote the number of paths of length $i$ in $M \oplus M^*$. 
Since $M^*$ is maximum, it has no augmenting path, so all odd length paths are augmenting 
paths of $M$. Since $M$ is maximal, there are no augmenting paths of length 1, so  $k_1=0$. 
Every even length path and every cycle has an equal number of edges from $M$ and from $M^*$. 
A path of length $2i+1$ has $i$ edges from $M$ and $i+1$ edges from $M^*$. 
$$ |M^*|-|M|
=\sum_{i\geq 1 } k_{2i+1} 
\leq k_3+\sum_{i\geq 2} \frac{1}{2}i k_{2i+1}
= \frac{1}{2}k_3+\frac{1}{2}\sum_{i\geq 1} i k_{2i+1}
\leq  \frac{1}{2}k_3+\frac{1}{2} |M|.
$$
Thus, using our assumption on $|M|$, $k_3\geq 2 |M^*| - 3 |M| \geq 2|M^*|- (\frac{3}{2}+3\epsilon )|M^*|,$
implying the Lemma. \qed
\end{proof}

\begin{figure}[ht]
\begin{center}
$\quad$ \includegraphics[height=5.5cm, bb=120 330 475 515]{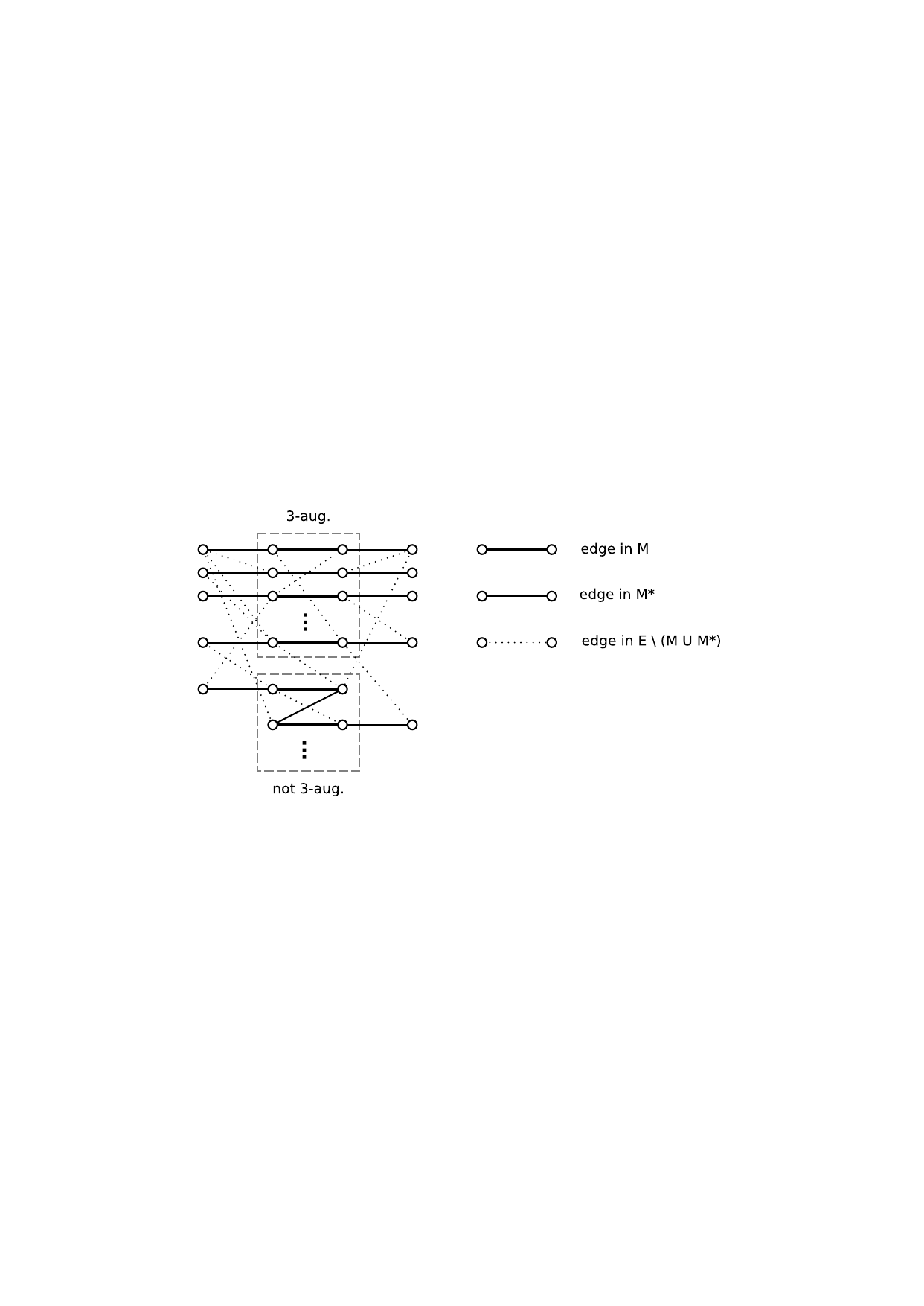}
\caption[Small Maximal Matchings admit many length $3$-Augmenting Paths.]{ Illustration of Lemma~\ref{lemma:augmentable_edges}. 
If $|M| \le (1/2 + \epsilon)|M^*|$, then at least $ (\frac{1}{2} - 3\epsilon) |M^*|$ edges of $M$ are
$3$-augmentable. \label{fig:many-length-3-aug-paths} }
\end{center}
\end{figure}

We search for $3$-augmenting paths as follows. Firstly, we compute a maximal matching $M_L$ via the Greedy algorithm
between the $A$ vertices that are matched in $M_G$ and the free $B$ vertices. Under the assumption that $M_G$ is close to a $1/2$ approximation,
most of the edges of $M_G$ are $3$-augmentable.  There exists hence a large matching, and since $M_L$ is a maximal matching, $M_L$ will 
be at least of size $1/2$ times the number of $3$-augmentable edges. Edges from $M_L$ will serve as the start of 
length $3$-augmenting paths. Then in the third pass, 
we compute another maximal matching $M_R$ in order to complete $3$-augmenting paths
with the edges of $M_G$ and $M_L$. This algorithm is stated in Algorithm~\ref{algo:3-pass-matching}, and illustrated in 
Figure~\ref{fig:algo-3-passes}. This idea was already used in \cite{fkmsz05}. The authors present there an $\Order( (\log \frac{1}{\epsilon})/\epsilon)$-pass 
semi-streaming algorithm that computes a $2/3 - \epsilon$ approximation to the maximum bipartite matching problem. An analysis for Algorithm~\ref{algo:3-pass-matching}
can be derived from their work.

\begin{algorithm}[H]\small
\caption{Three-pass Bipartite Matching Algorithm \label{algo:3-pass-matching}}
\begin{algorithmic}[1]
 \REQUIRE The input stream $\pi$ is an edge stream of a bipartite graph $G = (A, B, E)$
 \STATE $M_G, M_L, M_R \gets \varnothing$
 \STATE \textbf{1\textsuperscript{st} pass:} $M_G \gets \Greedy(\pi)$
 \STATE $G_L \gets $ complete graph between $A(M_G)$ and $B \setminus B(M_G)$
 \STATE \textbf{2\textsuperscript{nd} pass:} $M_L \gets \Greedy(\pi \cap G_L)$
 \STATE $G_R \gets $ complete graph between $\{b \in B(M_G) \, : \, A(M_G(b)) \in A(M_L) \}$ and $A \setminus A(M_G)$
 \STATE \textbf{3\textsuperscript{rd} pass:} $M_R \gets \Greedy(\pi \cap G_R)$
 \RETURN maximum matching in $M_G \cup M_L \cup M_R$
\end{algorithmic}
\end{algorithm}

\begin{figure}[ht]
\begin{center}
\includegraphics[height=5cm, bb=0 0 350 180]{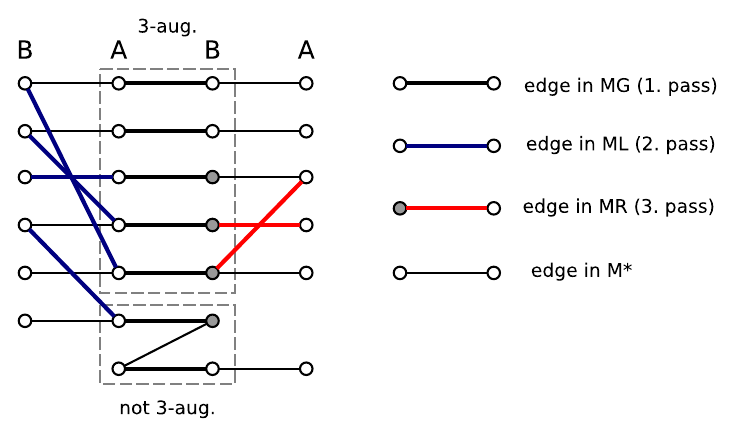}
\caption[Three-pass Bipartite Matching Algorithm]{ Illustration of Algorithm~\ref{algo:3-pass-matching}. The graph
contains a perfect matching of size $13$. In the first pass, $M_G$ is computed and it has size $7$. This is close to a $1/2$
approximation and by Lemma~\ref{lemma:augmentable_edges}, $M$ has many (here $5$) $3$-augmentable edges. There exists hence a matching of size
at least $5$ between $A(M_G)$ and the free $B$ vertices. Since $M_L$ is maximal, it is of size at least $5/2$ (here $4$).
Then, a maximal matching is computed between the solid vertices, which are the $B$ vertices of edges of $M_G$ that may potentially 
be completed to $3$-augmenting paths, and the free $A$ vertices. In this example, two $3$-augmenting paths were found.
\label{fig:algo-3-passes} }
\end{center}
\end{figure}



\section{One-pass Matching Algorithm on Random Order} \label{section:matching-one-pass}

We discuss now, how the $3$-pass algorithm from the previous section can
be simulated with a single pass if the input is in random order. First, we present in 
Subsection~\ref{section:convergence-greedy} a lemma about the convergence of the Greedy 
matching algorithm if the input is in random order. This lemma is the main ingredient for
our one-pass algorithms. Then, in Subsection~\ref{section:one-pass-bipartite} we discuss 
our one-pass algorithm on random order for bipartite graphs, and we extend it to general 
graphs in Subsection~\ref{section:one-pass-general}.

\subsection{A Lemma on the Convergence of the Greedy Algorithm} \label{section:convergence-greedy}
We identify a property about the convergence of the Greedy algorithm that is required 
for the construction of our one-pass algorithms on random order. We show that if in expectation
over all input edges sequences the Greedy algorithm computes a matching that is close to a $1/2$ 
approximation, then Greedy builds this matching early on, or in other words, the Greedy algorithm
converges quickly, see Lemma~\ref{lemma:exp}.

\begin{lemma}\label{lemma:exp}
If $\Exp_\pi |\Greedy(\pi)| \leq (\frac{1}{2} + \epsilon) |M^*|$ for some $0 < \epsilon < 1/2$,
then for any $0 < \alpha \le 1$,
\begin{equation*}
\Exp_{\pi} |\Greedy(\pi[1, \alpha m])| \ge |M^*|(\frac{1}{2}-(\frac{1}{\alpha}-2) \epsilon) . 
\end{equation*}
\end{lemma} 

\begin{proof}
Let $M_0 = \Greedy(\pi[1, \alpha m])$. Rather than directly analyzing the number of edges 
$|M_0|$, we analyze the number of vertices matched by $M_0$, which is equivalent since 
$|V(M_0)|=2 (|M_0|).$

Fix an edge $e= ab$ of $M^*$. Either $e\in M_0$, or at least one of $a,b$ is matched by 
$M_0$, or neither $a$ nor $b$ are matched. Summing over all $e\in M^*$ gives
$$|V(M_0)|\geq 2|M^*\cap M_0|+|M^*\setminus M_0|- 
\sum_{e=ab \in M^*} \chi[a \mbox{ and } b \notin V(M_0)] ,$$
where $\chi[X] = 1$ if the event $X$ happens, otherwise $\chi[X] = 0$.
We show in Lemma~\ref{lemma:correlation} that 
\begin{equation}
\Pr[a \mbox{ and } b \notin V(M_0) ]\leq (\frac{1}{\alpha}-1) \Pr [ e\in M_0] . \label{eqn:493}
\end{equation}

Taking expectations and using Inequality~\ref{eqn:493},
\begin{eqnarray*}
\Exp_{\pi}(|V(M_0)|) &\geq & 2\Exp_{\pi} |M^*\cap M_0|+ \Exp_{\pi} |M^*\setminus M_0|- 
(\frac{1}{\alpha }-1)\Exp_{\pi} |M^*\cap M_0| \\
& = & |M^*| - (\frac{1}{\alpha }-2)\Exp_{\pi} |M^*\cap M_0|  .
\end{eqnarray*}
We will show in Lemma~\ref{lemma:max_opt_edges} that for a maximum matching $M^*$ and any maximal matching $M_G$, 
we have $|M_G \cap M^*| \le 2(|M_G| - 1/2 |M^*|)$.
Using this, and since $M_0$ is just a subset of the edges of $M_G$, we obtain
by linearity of expectation
$$\Exp_{\pi} |M^*\cap M_0|\leq \Exp_{\pi} |M^* \cap M_G|
\leq 2(\Exp_{\pi}|M_G|-\frac{1}{2}|M^*|) \leq 2\epsilon |M^*|.$$
Combining gives the Lemma.  \qed
\end{proof}

We now prove Lemma~\ref{lemma:correlation} that was used in the proof of Lemma~\ref{lemma:exp}.

\begin{lemma}\label{lemma:correlation}
Suppose that $\Exp_\pi |\Greedy(\pi)| \leq (\frac{1}{2} + \epsilon) |M^*|$ for some $0 < \epsilon < 1/2$.
Let $M_0 = \Greedy(\pi[1, \alpha m])$ for some $0 < \alpha \le 1/2$. Then:
\begin{equation*}
\forall e=ab \in E: \Pr[a \mbox{ and } b \notin V(M_0) ]\leq (\frac{1}{\alpha }-1) \Pr [ e \in M_0] . 
\end{equation*}

\end{lemma}
\begin{proof}
Observe: $\Pr[a \mbox{ and } b \notin V(M_0)]+\Pr [ e \in M_0]= \Pr[a \mbox{ and } b \notin V(M_0 \setminus \{e \})],$
because the two events on the left hand side are disjoint and their union is the event on the right hand side.

Consider the following probabilistic argument. 
Take the execution for a particular ordering $\pi$. Assume that $a \mbox{ and } b \notin V(M_0 \setminus \{e \})$  
and let $t$ be the arrival time of $e$. If we modify the ordering 
by changing the arrival time of $e$ to some 
time $t'\leq t$, then we still have $a \mbox{ and } b \notin V(M_0 \setminus \{e \})$. 
More formally, we define a map $f$ from the uniform distribution on all orderings to the uniform distribution on all orderings such that $e\in\pi [1,\alpha m]$: if $e\in \pi[1,\alpha m]$ then $f(\pi)=\pi$ and otherwise $f(\pi)$ is the permutation obtained from $\pi$ by removing $e$ and re-inserting it at a position picked uniformly at random in $[1,\alpha m]$. Thus,
$$\Pr [ a \mbox{ and } b \notin V(M_0 \setminus \{e \})]\leq 
\Pr [ a \mbox{ and } b \notin V(M_0 \setminus \{e \})|   e\in\pi[1,\alpha m] ].$$
Now, the right-hand side equals $\Pr [e\in M_0 | e\in \pi[1,\alpha m]]$, which simplifies into
$\Pr [ e\in M_0] /\Pr [e\in \pi[1,\alpha m]]$ since $e$ can only be in $M_0$ if it is one of the first $\alpha m$ arrivals.
Then we conclude the Lemma by the random order assumption $\Pr [e\in \pi[1,\alpha m]] = \alpha$. \qed
\end{proof}

Lemma~\ref{lemma:max_opt_edges} shows that an optimal matching and a maximal matching that is far 
from this optimal matching in size do not have many edges in common.\vspace*{-1mm}

\begin{lemma} 
\label{lemma:max_opt_edges}
Let $M$ be a maximal matching of a graph $G$. Then 
\begin{equation*}
|M \cap M^*| \le 2(|M|-\frac{1}{2}|M^*|).
\end{equation*}
\end{lemma}

\begin{proof}
This is a piece of elementary combinatorics. Since $M$ is a maximal 
matching, for every edge $e$ of $M^*\setminus M$, at least one of the 
two endpoints of $e$ is matched in $M\setminus M^*$, and so 
$|M \setminus M^*|\geq (1/2)  |M^* \setminus M|$. We have 
$ |M^* \setminus M| = |M^*|-|M^*\cap M|$. Combining gives
$$ |M \cap M^*| = |M| - |M \setminus M^*| \le |M| - \frac{1}{2} |M^* \setminus M| = |M| - \frac{1}{2} (|M^*| - |M^* \cap M|)$$
which implies the Lemma. \qed
\end{proof}

\subsection{Bipartite Graphs} \label{section:one-pass-bipartite}
\subsubsection{Algorithm}
We simulate the $3$-pass algorithm, Algorithm~\ref{algo:3-pass-matching}, in one pass as follows. 
We split the input graph stream $\pi \in \Pi(G)$ into three 
phases $\pi[1,\alpha m]$, $\pi(\alpha m, \beta m]$, and $\pi(\beta m, m]$ (for $0 < \alpha < \beta < 1$), and we build a 
matching in each phase. $M_0$ is built during the first phase and corresponds to matching $M_G$ of our $3$-pass algorithm. $M_1$ is built in the second phase and $M_2$ in the
third, and they correspond to $M_L$ and $M_R$ of our $3$-pass algorithm, respectively.
Assume that $\Greedy$ performs badly on the input graph $G$. Lemma~\ref{lemma:augmentable_edges} tells us that almost all 
of the edges of $M_0$ are $3$-augmentable. To find $3$-augmenting paths, in the next part of the 
stream, we run $\Greedy$ to compute a matching $M_1$ between $B(M_0)$ and $\overline{A(M_0)}$.
The edges in $M_1$ serve as one of the edges of $3$-augmenting paths
(from the $B$-side of $M_0$). 
In Lemma~\ref{l_number_of_right_wings}, we show that we find a constant fraction of those. 
In the last part of the stream, again by the help of $\Greedy$, we compute a matching
$M_2$ that completes the $3$-augmenting paths. Lemma~\ref{l_left_wings} shows 
that by this strategy we find many $3$-augmenting paths. 
Then, either a simple $\Greedy$ matching performs well on $G$, or else we can find many $3$-augmenting paths and
use them to improve $M_0$, see the main 
theorem, Theorem~\ref{theorem:one-pass-bip}, whose proof is deferred to the end of this section.
An illustration is provided in Figure~\ref{l_fig_0}. 




\begin{algorithm} 
\caption{One-pass Bipartite Matching on Random Order \label{l_algo1}}
\begin{algorithmic}[1]
\STATE $\alpha \gets 0.4312, \beta \gets 0.7595$
\STATE $M_G \gets \Greedy(\pi)$
\STATE $M_0 \gets \Greedy(\pi[1,\alpha m])$, matching obtained by $\Greedy$ on the first $\lfloor \alpha m \rfloor$~edges
\STATE $F_1 \gets $ complete bipartite graph between $B(M_0)$ and $\overline{A(M_0)}$
\STATE $M_1 \gets \Greedy(F_1 \cap\pi(\alpha m, \beta m])$,  matching obtained by Greedy on edges $\lfloor \alpha m\rfloor +1$ through $\beta m$ that intersect $F_1$
\STATE   $A' \gets \{ a \in A \, | \, \exists b\in B(M_1): ab \in M_0 \}$
\STATE   $F_2 \gets $  complete bipartite graph between $A'$ and $ \overline{B(M_0)}$
\STATE $M_2 \gets \Greedy(F_2 \cap \pi(\beta m, m])$,  matching obtained by $\Greedy$ on edges $\lfloor \beta m\rfloor +1$ through $m$ that intersect $F_2$
\STATE  $M\gets $  matching obtained from $M_0$ augmented by $M_1 \cup M_2$
\RETURN larger of the two matchings  $M_G$ and $M$
\end{algorithmic}
\end{algorithm}

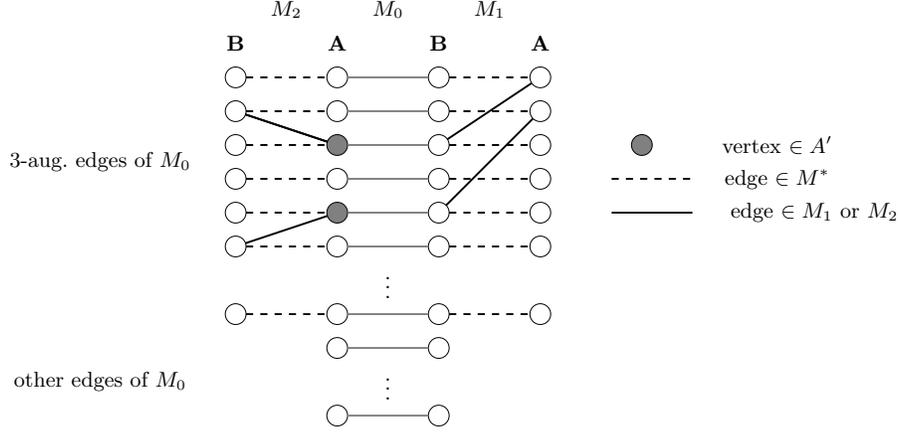
\begin{figure}[ht!]
\centerline{\scalebox{0.9}{
\begin{tikzpicture}[node distance = 1.5 cm]
    \GraphInit[vstyle=normal]
\draw(-2.0, 8.75) node{$3$-aug. edges of $M_0$};
\draw(-2.0, 5.5) node{other edges of $M_0$};
\draw(2.25, 11) node{\textbf{$M_0$}};
\draw(3.75, 11) node{\textbf{$M_1$}};
\draw(0.75, 11) node{\textbf{$M_2$}};
\draw(0, 10.5) node{\textbf{B}};
\draw(1.5, 10.5) node{\textbf{A}};
\draw(3, 10.5) node{\textbf{B}};
\draw(4.5, 10.5) node{\textbf{A}};
\draw(2.25, 7) node{$\vdots$};
\draw(2.25, 5.5) node{$\vdots$};
\draw(8, 9) node{vertex $\in A'$};
\draw(8.55, 8) node{edge $\in M_1$ or $M_2$};
\draw(8, 8.5) node{edge $\in M^*$};
\tikzset{VertexStyle/.style = {shape = circle, color = black, fill = white,minimum size = 0.1cm, draw}}
\Vertex[x=0, y=10, L=$$]{d1} \Vertex[x=1.5, y=10, L=$$]{a1} \Vertex[x=3, y=10, L=$$]{b1} 
\Vertex[x=0, y=9.5, L=$$]{d2} 				 \Vertex[x=3, y=9.5, L=$$]{b2} \Vertex[x=4.5, y=9.5, L=$$]{c2}
\Vertex[x=0, y=9, L=$$]{d3}  \Vertex[x=3, y=9, L=$$]{b3} \Vertex[x=4.5, y=9, L=$$]{c3}                
\Vertex[x=0, y=8.5, L=$$]{d4} \Vertex[x=1.5, y=8.5, L=$$]{a4} \Vertex[x=3, y=8.5, L=$$]{b4}                 
\Vertex[x=0, y=8, L=$$]{d5}  \Vertex[x=3, y=8, L=$$]{b5} \Vertex[x=4.5, y=8, L=$$]{c5}                
\Vertex[x=0, y=7.5, L=$$]{d6} 				 \Vertex[x=3, y=7.5, L=$$]{b6} \Vertex[x=4.5, y=7.5, L=$$]{c6}
\Vertex[x=0, y=6.5, L=$$]{d7} \Vertex[x=1.5, y=6.5, L=$$]{a7} \Vertex[x=3, y=6.5, L=$$]{b7} \Vertex[x=4.5, y=6.5, L=$$]{c7}
\Vertex[x=1.5, y=6, L=$$]{a8} \Vertex[x=3, y=6, L=$$]{b8} 
\Vertex[x=1.5, y=5, L=$$]{a9} \Vertex[x=3, y=5, L=$$]{b9} 
\Vertex[x=1.5, y=9.5, L=$$]{a2}
\Vertex[x=1.5, y=7.5, L=$$]{a6}
\Vertex[x=4.5, y=10, L=$$]{c1}
\Vertex[x=4.5, y=8.5, L=$$]{c4}
%
\tikzset{VertexStyle/.style = {shape = circle, color = black, fill = gray, minimum size = 0.1cm, draw}}
\Vertex[x=1.5, y=9, L=$$]{a3}
\Vertex[x=1.5, y=8, L=$$]{a5}
\Vertex[x=6.0, y=9, L=$$]{e1} 
\tikzset{VertexStyle/.style = {shape = circle, color = white, fill = white, minimum size = 0.1cm, draw}}
\Vertex[x=6.9, y=8.5, L=$$]{f3}
\Vertex[x=5.4, y=8.5, L=$$]{e3}
\Vertex[x=5.4, y=8, L=$$]{f4}
\Vertex[x=6.9, y=8, L=$$]{e4}
\SetUpEdge[style={solid}, color=gray]
\Edge(a1)(b1)
\Edge(a2)(b2)
\Edge(a3)(b3)
\Edge(a4)(b4)
\Edge(a5)(b5)
\Edge(a6)(b6)
\Edge(a7)(b7)
\Edge(a8)(b8)
\Edge(a9)(b9)
\SetUpEdge[style={dashed}, color=black]
\Edge(d1)(a1) \Edge(b1)(c1)
\Edge(d2)(a2) \Edge(b2)(c2)
\Edge(d3)(a3) \Edge(b3)(c3)
\Edge(d4)(a4) \Edge(b4)(c4)
\Edge(d5)(a5) \Edge(b5)(c5)
\Edge(d6)(a6) \Edge(b6)(c6)
\Edge(d7)(a7) \Edge(b7)(c7)
\Edge(e3)(f3)
\SetUpEdge[style={dotted}, color=black]
\SetUpEdge[style={thick}, color=black]
\Edge(b5)(c2)
\Edge(a3)(d2)
\Edge(a3)(d2)
\Edge(a5)(d6)
\Edge(f4)(e4)
\Edge(c1)(b3)
\end{tikzpicture} }}
\caption[Analysis of One-pass Bipartite Matching Algorithm]{Illustration of Algorithm~\ref{l_algo1}. Note that every edge of $M_2$ completes a $3$-augmenting path consisting of one edge of $M_1$ (on the right hand side of the picture) followed by one edge of $M_0$ (center)
followed by one edge of $M_2$ (on the left hand side of the picture). \label{l_fig_0} } 
\end{figure}

Observe that our algorithm only uses memory space $\Order(n\log n)$.
Indeed, the subsets $F_1$ and $F_2$ can be compactly represented by two
$n$-bit arrays,
and checking if an edge of $\pi$ belongs to one of them can be done within time
$\Order(1)$ from that compact representation.

\hspace{0.3cm}

\subsubsection{Analysis}
We use the notations of Algorithm~\ref{l_algo1}. Consider $\alpha$ and $\beta$
as variables with $0 \le \alpha \le \frac{1}{2} < \beta < 1$.

\begin{lemma}
\label{l_number_of_right_wings}
Assume that $\Exp_\pi |M_G|\leq (\frac{1}{2} + \epsilon) |M^*|$. 
Then the expected size of a maximum matching between the vertices of $A$ 
left unmatched by $M_0$ and the vertices of $B$ matched by $M_0$ can be bounded below as follows:
$$
 \Exp_{\pi}  |\OPT( \overline{A(M_0)}, B(M_0) )| \ge |M^*|(\frac{1}{2}-(\frac{1}{\alpha}+2) \epsilon) .
$$
\end{lemma}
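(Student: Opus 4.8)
The plan is to produce a concrete large matching between $\overline{A(M_0)}$ and $B(M_0)$ out of the optimal matching $M^*$ itself. Restrict $M^*$ to those edges $ab$ with $a\in\overline{A(M_0)}$ and $b\in B(M_0)$; being a subset of $M^*$, these edges form a matching, and by construction they live in the bipartite graph induced by $\overline{A(M_0)}\cup B(M_0)$. Hence, if $n_3$ denotes the number of such edges, then $|\OPT(\overline{A(M_0)},B(M_0))|\ge n_3$, and it suffices to prove $\Exp_\pi n_3\ge|M^*|(\tfrac12-(\tfrac1\alpha+2)\epsilon)$.

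To estimate $\Exp_\pi n_3$, I would classify every edge $ab\in M^*$ by whether each endpoint is covered by $M_0$. Let $S_A$ be the number of edges of $M^*$ whose $A$-endpoint lies in $A(M_0)$, and let $n_4$ be the number of edges of $M^*$ with both endpoints uncovered by $M_0$ (that is, $a\in\overline{A(M_0)}$ and $b\in\overline{B(M_0)}$). Every edge of $M^*$ either has its $A$-endpoint matched ($S_A$ of them), or has its $A$-endpoint unmatched, in which case its $B$-endpoint is either matched (contributing to $n_3$) or unmatched (contributing to $n_4$); thus $n_3=|M^*|-S_A-n_4$. The task therefore reduces to upper bounding $\Exp_\pi S_A$ and $\Exp_\pi n_4$.

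For $S_A$: the $A$-endpoints of $M^*$ are pairwise distinct, so $S_A=|A(M^*)\cap A(M_0)|\le|A(M_0)|=|M_0|$. Since $\Greedy$ never removes an edge once inserted, $M_0\subseteq M_G$, whence $\Exp_\pi S_A\le\Exp_\pi|M_0|\le\Exp_\pi|M_G|\le(\tfrac12+\epsilon)|M^*|$ by hypothesis. For $n_4$: summing the pointwise inequality of Lemma~\ref{lemma:correlation} over all $e=ab\in M^*$ gives $\Exp_\pi n_4\le(\tfrac1\alpha-1)\Exp_\pi|M^*\cap M_0|$, and the bound $\Exp_\pi|M^*\cap M_0|\le2\epsilon|M^*|$ --- obtained exactly as in the proof of Lemma~\ref{lemma:exp}, via $M_0\subseteq M_G$ and Lemma~\ref{lemma:max_opt_edges} --- yields $\Exp_\pi n_4\le2(\tfrac1\alpha-1)\epsilon|M^*|$. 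Combining the three estimates gives $\Exp_\pi n_3\ge|M^*|(\tfrac12-(\tfrac2\alpha-1)\epsilon)$, which for $\alpha\ge\tfrac13$ (in particular for the value $\alpha=0.4312$ used by the algorithm) is at least the claimed $|M^*|(\tfrac12-(\tfrac1\alpha+2)\epsilon)$.

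The only genuinely delicate ingredient is the control of $\Exp_\pi n_4$, the expected number of optimal edges that entirely escape the early greedy matching $M_0$; this is exactly where the random-order hypothesis enters, through Lemma~\ref{lemma:correlation}, which trades this rare miss against the probability that the edge itself was selected into $M_0$. Everything else is deterministic bookkeeping together with linearity of expectation, so I anticipate no further obstacle.
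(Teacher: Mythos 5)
Your argument is correct step by step, but it is a genuinely different route from the paper's, and it ends up proving a slightly different inequality. The paper's proof counts $3$-augmenting paths: by (the linear inequality inside) Lemma~\ref{lemma:augmentable_edges}, $M_G\oplus M^*$ has in expectation at least $(\frac{1}{2}-3\epsilon)|M^*|$ augmenting paths of length $3$; all but at most $|M_G|-|M_0|$ of them survive as $3$-augmenting paths of $M_0\oplus M^*$, each contributing one edge between $\overline{A(M_0)}$ and $B(M_0)$; substituting the bound of Lemma~\ref{lemma:exp} on $\Exp_{\pi}|M_0|$ then gives exactly the stated coefficient $\frac{1}{\alpha}+2$. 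You instead count \emph{all} edges of $M^*$ running from $\overline{A(M_0)}$ to $B(M_0)$ (a superset of those paths' right wings), via the exact partition $|M^*|=S_A+n_3+n_4$, bounding $S_A\le|M_0|\le|M_G|$ and bounding $\Exp_{\pi}n_4$ by invoking Lemma~\ref{lemma:correlation} and Lemma~\ref{lemma:max_opt_edges} directly; in effect you inline the mechanism of the proof of Lemma~\ref{lemma:exp}, aimed at $n_3$ rather than at $|V(M_0)|$, and you bypass the path-counting of Lemma~\ref{lemma:augmentable_edges} entirely. Both proofs draw their random-order content from the same source, Lemma~\ref{lemma:correlation} (the paper indirectly, you directly). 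What your route buys is a \emph{stronger} bound in the regime that matters: $\frac{1}{2}-(\frac{2}{\alpha}-1)\epsilon$ dominates $\frac{1}{2}-(\frac{1}{\alpha}+2)\epsilon$ exactly when $\alpha\ge\frac{1}{3}$, in particular at the algorithm's $\alpha=0.4312$, so propagating your inequality through the proof of Theorem~\ref{theorem:one-pass-bip} would even permit a marginally better final constant. What it costs is generality: the analysis treats $\alpha$ as a free parameter with $0\le\alpha\le\frac{1}{2}$, and for $\alpha<\frac{1}{3}$ your bound is strictly weaker than the lemma's claim, so strictly speaking you have established the stated lemma only for $\alpha\in[\frac{1}{3},\frac{1}{2}]$ --- a restriction you flag yourself, harmless for the paper's application, but it does mean your proof cannot replace the paper's verbatim, since the path-counting argument yields the stated bound for every admissible $\alpha$.
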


\begin{proof}
The size of a maximum matching between $\overline{A(M_0)}$ and $ B(M_0)$ 
is at least the number of augmenting paths of length 3 in $M_0\oplus M^*$. 
By 
Lemma~\ref{lemma:augmentable_edges}, in expectation, the number of augmenting paths of 
length $3$ in $ M_G  \oplus M^*$ is at least $ (\frac{1}{2} - 3 \epsilon)|M^*|.$
All of those are augmenting paths of length $3$ in $M_0\oplus M^*$, 
except for at most $|M_G| - |M_0|$. Hence, in expectation, $M_0$ contains
$(\frac{1}{2} - 3 \epsilon)|M^*| - (\Exp_{\pi} |M_G| - \Exp_{\pi} |M_0|)$
$3$-augmentable edges. Lemma~\ref{lemma:exp} applied to $M_0$ concludes the proof. \qed
\end{proof}

\begin{lemma}
\label{l_exp_right_wings}
$ 
\Exp_{\pi} |M_1| \ge \frac{1}{2}(\beta - \alpha )( \Exp_{\pi} |\OPT(\overline{A(M_0)}, B(M_0))|-\frac{1}{1-\alpha}).
$
\end{lemma}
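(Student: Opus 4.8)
The plan is to bound $|M_1|$ from below by combining two facts: $\Greedy$ produces a \emph{maximal} matching among the edges it actually processes, and in the random-order model a fixed edge survives into the window $\pi(\alpha m,\beta m]$ with a predictable probability. Fix, as a function of $M_0$ alone, a canonical maximum matching $N$ of the graph $F_1\cap E=G|_{\overline{A(M_0)}\times B(M_0)}$, so that $|N|=|\OPT(\overline{A(M_0)},B(M_0))|$. Those edges of $N$ that arrive during phase~2 form a matching contained in the set of $F_1$-edges on which $M_1=\Greedy(F_1\cap\pi(\alpha m,\beta m])$ is run; since $M_1$ is maximal on that edge set, $|M_1|\ge\tfrac12\,\bigl|\{e\in N:\ e\text{ arrives in }\pi(\alpha m,\beta m]\}\bigr|$. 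It therefore suffices to show that in expectation a $(\beta-\alpha)$ fraction of $N$ survives into phase~2, i.e.\ $\Exp_\pi|N\cap\pi(\alpha m,\beta m]|\ge(\beta-\alpha)\,\Exp_\pi|N|$; the additive $-\tfrac{1}{1-\alpha}$ is harmless slack that absorbs the discrepancy between the continuous model below and the integer windows.

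To make the placement probabilities clean I would model the random order by assigning each edge an independent uniform arrival time in $[0,1]$, with phase~1 $=[0,\alpha]$ and phase~2 $=(\alpha,\beta]$. The main obstacle is that $N$ is \emph{not} a fixed matching: it is determined by $M_0$, hence by which edges land in phase~1 and in what order, so ``$e\in N$'' and ``$e$ arrives in phase~2'' are correlated and one cannot simply multiply probabilities. To break this dependence I would argue edge by edge: fix $e=ab$, condition on the arrival times of all \emph{other} edges, and let $M_0^{(0)}$ be the greedy matching they produce on phase~1.

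The structural observation that untangles the correlation is that an $F_1$-edge which greedy \emph{skips} can be deleted from (or left out of) phase~1 without changing $M_0$, because greedy's decisions depend only on the current matching; skipping $e$ leaves $M_0=M_0^{(0)}$, hence leaves $F_1$ and the canonical $N$ unchanged. Using this, whether $e$ can ever lie in $N$ is governed entirely by the other edges: either $e\notin F_1(M_0^{(0)})$ or $e\notin N(M_0^{(0)})$, in which case $e\notin N$ for \emph{every} value of $t_e$; or else $a$ is free and $b$ is matched in $M_0^{(0)}$ by an edge arriving at some time $t_{bc}\le\alpha$, and then a short case analysis on $t_e$ gives $e\in N$ exactly when $t_e>t_{bc}$ (for $t_e<t_{bc}$ greedy would instead grab $e$, placing it in $M_0$ and thus out of $F_1\supseteq N$, while for $t_e>t_{bc}$ greedy skips $e$, so $M_0=M_0^{(0)}$ and $e\in N$).

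Consequently, conditioned on the other times, $e$ contributes probability exactly $(\beta-\alpha)$ to $|N\cap\text{phase }2|$ (it lies in $N$ and in phase~2 for every $t_e\in(\alpha,\beta]$) but probability $1-t_{bc}\le 1$ to $|N|$. Summing these per-edge contributions and taking expectations yields $\Exp_\pi|N\cap\text{phase }2|\ge(\beta-\alpha)\Exp_\pi|N|$, whence $\Exp_\pi|M_1|\ge\tfrac12(\beta-\alpha)\,\Exp_\pi|\OPT(\overline{A(M_0)},B(M_0))|$, which is in fact marginally stronger than the stated bound; equivalently one may condition on phase~1 and note that each non-phase-1 edge of $N$ falls into phase~2 with probability $\tfrac{\beta-\alpha}{1-\alpha}$, combined with a bound on the $N$-edges lost to phase~1. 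The delicate points to get right are the case analysis of $t_e$ relative to $t_{bc}$ and $\alpha$ (especially verifying that inserting $e$ before $b$ is matched genuinely forces greedy to take $e$ and eject it from $N$) and the bookkeeping of the floor discrepancies between the continuous times and the discrete windows $\pi(\alpha m,\beta m]$, which is exactly what the $-\tfrac{1}{1-\alpha}$ term is there to cover.
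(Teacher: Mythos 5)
Your argument has the same skeleton as the paper's proof, and its core is sound where it applies: you use maximality of $M_1$ to get the factor $\tfrac12$ against a canonical maximum matching $N$ of $G|_{\overline{A(M_0)}\times B(M_0)}$, and then a per-edge conditioning argument to show $N$-edges land in phase~2 with the right probability. Your structural observations are correct: a skipped edge never perturbs $\Greedy$, and, conditioned on the other edges, either $e$ can never be in $N$, or ($a$ stays free, $b$ is matched at some time $t_{bc}\le\alpha$) and $e\in N$ exactly when $t_e>t_{bc}$, since for $t_e<t_{bc}$ greedy grabs $e$ into $M_0$. This is morally the same coupling as the paper's re-insertion map $f$.

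The genuine gap is the probability model. You analyze the process in which phase~1 is $\{e: t_e\le\alpha\}$ for i.i.d.\ uniform times, but the lemma is about Algorithm~1, whose phases are the fixed position windows $\pi[1,\alpha m]$ and $\pi(\alpha m,\beta m]$. In your model the number of phase-1 edges is $\mathrm{Bin}(m,\alpha)$, which deviates from $\lfloor\alpha m\rfloor$ by $\Theta(\sqrt m)$ with constant probability, and since each edge added to or removed from the prefix can shift $M_0$ (hence $\overline{A(M_0)}$, $B(M_0)$, $N$, and $M_1$) by one edge, transferring your bound to the algorithm's actual $M_0,M_1$ costs an additive $\Theta(\sqrt m)$, not $O(1)$. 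So the claim that $-\tfrac{1}{1-\alpha}$ is ``harmless slack that absorbs the discrepancy with the integer windows'' is false: that term cannot absorb a $\Theta(\sqrt m)$ error, and such an error is fatal on graphs with $|M^*|=o(\sqrt m)$ (e.g., a union of few large stars), whereas the paper's accounting tolerates only $O(1)$ additive loss. Note also that you cannot repair this by conditioning on the window sizes being exactly right, because that destroys the independence of $t_e$ from the other edges' times, which your per-edge argument needs. What $-\tfrac{1}{1-\alpha}$ actually pays for in the paper is a single-edge effect that only appears when one works directly in the permutation model: re-inserting $e$ uniformly into positions $(\alpha m,m]$ pushes exactly one edge (the one at position $\lfloor\alpha m\rfloor+1$) across the phase boundary, and the paper's device $M_0'=M_0\setminus\{f\}$ with $|\OPT(\overline{A(M_0)},B(M_0))|\ge|\OPT(\overline{A(M_0')},B(M_0'))|-1$ charges that one edge. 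Your closing remark (``condition on phase~1 \dots combined with a bound on the $N$-edges lost to phase~1'') is a restatement of the problem rather than a proof: bounding $\Exp_\pi|N\cap\pi[1,\alpha m]|$ is precisely the correlation between $N$ and the phase-1 prefix that the re-insertion coupling exists to control. To close the gap, your case analysis on $t_e$ versus $t_{bc}$ must be rerun in the permutation model, where the background matching is one of two greedy matchings (on the first $\lfloor\alpha m\rfloor-1$ or the first $\lfloor\alpha m\rfloor$ of the other edges, depending on where $e$ lands), and their disagreement must be charged explicitly -- which is exactly the paper's proof.
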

\begin{proof}
Since $\Greedy$ computes a maximal matching which is at least half the
size of a maximum matching, 
\begin{equation*}
\Exp_{\pi} |M_1|\geq \frac{1}{2} \Exp_{\pi}  |\OPT(\overline{A(M_0)}, B(M_0))\cap \pi(\alpha m, \beta m]|. 
\end{equation*}

By independence of $M_0$ and the ordering within $(\alpha m, m]$, we see that even if we condition on 
$M_0$, we still have that $\pi(\alpha m,\beta m]$ is a random uniform subset of $\pi(\alpha m,m ]$. Thus:
\begin{eqnarray*}
\Exp_{\pi} |\OPT(\overline{A(M_0)}, B(M_0))\cap \pi(\alpha m, \beta m]| = \quad\quad\quad\\
\quad\quad\quad\quad\tfrac{\beta -\alpha }{1-\alpha } \Exp_{\pi} |\OPT(\overline{A(M_0)}, B(M_0))\cap \pi(\alpha m, m]|. 
\end{eqnarray*}

We use a probabilistic argument similar to but slightly more complicated than in the proof 
of Lemma~\ref{lemma:correlation}. We define a map $f$ from the uniform distribution on all 
orderings to the uniform distribution on all orderings such that $e\in\pi (\alpha m, m] $:
 if $e\in \pi (\alpha m, m] $ then $f(\pi)=\pi$ 
 and otherwise $f(\pi)$ is the permutation obtained from $\pi$ by removing $e$ and re-inserting it at a position picked uniformly at random in $ (\alpha m, m] $; 
 in the latter case, if this causes an edge $f=a'b'$, previously arriving at time $\lfloor \alpha m\rfloor +1$, to now arrive at time $\lfloor \alpha m\rfloor$ and to be added to $M_0$, 
 we define $M'_0=M_0\setminus \{ f\}$; in all other cases we define $M'_0=M_0$. 
 Thus, if in $\pi$ we have $e\in \OPT(\overline{A(M_0)}, B(M_0))$,
  then in $f(\pi)$ we have $e \in \OPT(\overline{A(M'_0)}, B(M'_0))$. 
Since the distribution of $f(\pi)$ is uniform conditioned on $e\in \pi(\alpha m, m] $:
\begin{eqnarray*}
\frac{ \Pr [ e\in \OPT(\overline{A(M'_0)}, B(M'_0)) \hbox{ and } e\in \pi(\alpha m,m] ] }{ \Pr [ e\in \pi(\alpha m, m] ] }
\geq
 \Pr [e\in \OPT(\overline{A(M_0)}, B(M_0))], 
\end{eqnarray*}
Using $\Pr [ e\in \pi(\alpha m, m] ] = 1-\alpha$ and summing over $e$:
\begin{eqnarray*}
\Exp_{\pi}  |\OPT(\overline{A(M'_0)}, B(M'_0))\cap \pi(\alpha m, m]| \geq 
 (1-\alpha) \Exp_{\pi} |\OPT(\overline{A(M_0)}, B(M_0)) |. 
\end{eqnarray*}

Since $M'_0$ and $M_0$ differ by at most one edge, 
$|\OPT(\overline{A(M_0)}, B(M_0))|\geq |\OPT(\overline{A(M'_0)}, B(M'_0))|-1$, 
and the Lemma follows.  \qed
\end{proof}

\begin{lemma}
\label{l_matching_size_left_wings}
Assume that $\Exp_\pi |M_G|\leq (\frac{1}{2} + \epsilon) |M^*|$. Then:
\begin{eqnarray*}
\Exp_{\pi} |\OPT(A', \overline{B(M_0)}| \ge  \Exp_{\pi} |M_1| - 4 \epsilon |M^*|. 
\end{eqnarray*}

\end{lemma}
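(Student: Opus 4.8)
The plan is to lower-bound $\OPT(A', \overline{B(M_0)})$ by exhibiting an explicit matching inside the induced subgraph $G|_{A' \times \overline{B(M_0)}}$, taken as a sub-matching of $M^*$. For $a \in A'$ the natural candidate edge is $a\,M^*(a)$, which lies in $G|_{A' \times \overline{B(M_0)}}$ precisely when $M^*(a) \in \overline{B(M_0)}$. So I set $Y = \{a \in A' : M^*(a) \in \overline{B(M_0)}\}$; the edges $\{a\,M^*(a) : a \in Y\}$ form a sub-matching of $M^*$ of size $|Y|$, hence $\OPT(A', \overline{B(M_0)}) \ge |Y|$. Since $A'$ consists of the $M_0$-partners of $B(M_1)$ and $M_0$ is a matching, $|A'| = |B(M_1)| = |M_1|$, so it remains to bound $\Exp_\pi |A' \setminus Y|$ by $4\epsilon|M^*|$.

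Next I would reinterpret $A' \setminus Y$ in terms of $M_0$-edges. Call an edge $ab' \in M_0$ (with $a \in A$) \emph{good} if $M^*(a) \in \overline{B(M_0)}$ and \emph{bad} otherwise (that is, $M^*(a) \in B(M_0)$, or $a$ is $M^*$-free). Since $A' \subseteq A(M_0)$, a vertex $a \in A'$ lies in $A' \setminus Y$ exactly when its $M_0$-edge is bad; therefore $|A' \setminus Y|$ is at most the total number of bad $M_0$-edges, which equals $|M_0|$ minus the number of good $M_0$-edges.

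The crux is to show that bad edges are few. The key observation is that every length-$3$ augmenting path $a'' - b' - a - b''$ of $M_0 \oplus M^*$ contributes its middle $M_0$-edge $b'a$ as a good edge, because then $M^*(a) = b'' \in \overline{B(M_0)}$; and distinct augmenting paths are vertex-disjoint, so they give distinct good edges. Writing $k_3(M)$ for the number of length-$3$ augmenting paths in $M \oplus M^*$, this yields (number of good edges) $\ge k_3(M_0)$, hence $(\#\text{bad}) \le |M_0| - k_3(M_0)$. I then transfer the count to the maximal matching $M_G \supseteq M_0$: as in the proof of Lemma~\ref{l_number_of_right_wings}, $k_3(M_0) \ge k_3(M_G) - (|M_G| - |M_0|)$, and substituting telescopes the $|M_0|$ terms to give $(\#\text{bad}) \le |M_G| - k_3(M_G)$. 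Finally, the computation in the proof of Lemma~\ref{lemma:augmentable_edges}, applied to the maximal matching $M_G$, gives the deterministic inequality $k_3(M_G) \ge 2|M^*| - 3|M_G|$, so $(\#\text{bad}) \le 4|M_G| - 2|M^*|$ pointwise in $\pi$. Taking expectations and using the hypothesis $\Exp_\pi |M_G| \le (\tfrac12+\epsilon)|M^*|$ gives $\Exp_\pi(\#\text{bad}) \le 4\epsilon|M^*|$, which combined with $\OPT(A', \overline{B(M_0)}) \ge |Y| = |A'| - |A' \setminus Y|$ and $|A'| = |M_1|$ finishes the proof.

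The main obstacle, and the only nonroutine idea, is this middle step of controlling the number of bad $M_0$-edges: a priori the $M^*$-partners of $M_0$-matched $A$-vertices need not avoid $B(M_0)$, and the defining property of $A'$ does not by itself help. What rescues the argument is that ``good'' is a strictly \emph{weaker} condition than ``being the center of a $3$-augmenting path,'' so the augmenting-path count (which is large exactly when $M_G$ is a poor maximal matching) lower-bounds the good edges and thereby forces the bad ones to be few. The one delicate point is that Lemma~\ref{lemma:augmentable_edges} carries a per-instance hypothesis on $|M|$ that $M_G$ meets only in expectation; I sidestep this by invoking instead its underlying deterministic bound $k_3(M_G) \ge 2|M^*| - 3|M_G|$, valid for any maximal matching, and only passing to expectations at the very end.
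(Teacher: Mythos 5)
Your proof is correct and follows essentially the same route as the paper's: lower-bound $\OPT(A', \overline{B(M_0)})$ by the sub-matching of $M^*$-edges at those vertices of $A'$ whose $M^*$-partner lies outside $B(M_0)$, then bound the exceptional vertices by the number of non-3-augmentable edges of $M_G \supseteq M_0$, which is at most $4\epsilon|M^*|$ in expectation by the argument of Lemma~\ref{lemma:augmentable_edges}. Your deviations are minor refinements rather than a different approach --- using the weaker ``good'' condition in place of full 3-augmentability, routing the transfer from $M_0$ to $M_G$ through the $k_3$ counts instead of the paper's direct set inclusion, and (to your credit) invoking the deterministic bound $k_3(M_G)\ge 2|M^*|-3|M_G|$ before taking expectations, which handles rigorously the fact that the hypothesis on $|M_G|$ holds only in expectation, a point the paper's three-line proof glosses over.
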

\begin{proof} $ |\OPT(A', \overline{B(M_0)}| $ is at least $|M_1|$ minus the 
number of edges of $M_0$ that are not 3-augmentable. Since $M_0$ is a subset 
of $M_G$, the latter term is bounded by the number of edges of $M_G$ that are 
not 3-augmentable, which by Lemma~\ref{lemma:augmentable_edges} is in expectation at most 
$(\frac{1}{2} + \epsilon) |M^*| - (\frac{1}{2} - 3\epsilon)|M^*| = 4 \epsilon |M^*|$.  \qed
\end{proof}

\begin{lemma}
\label{l_left_wings}
$\displaystyle\Exp_{\pi} |M_2| \ge \frac{1}{2} ((1 - \beta )\Exp_{\pi} |\OPT(A', \overline{B(M_0)})|-1).
 $
 \end{lemma}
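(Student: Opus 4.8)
The plan is to follow the two–ingredient template of Lemma~\ref{l_exp_right_wings}, adapted to the third phase. First I would use maximality of the greedy matching: $M_2=\Greedy(F_2\cap\pi(\beta m,m])$ is a maximal matching of the bipartite graph whose parts are $A'$ and $\overline{B(M_0)}$ and whose edges are precisely the phase-$3$ edges of $F_2$, so its size is at least half of a maximum matching of that graph, giving
\[
\Exp_{\pi}|M_2|\ \ge\ \tfrac12\,\Exp_{\pi}\big|\OPT(A',\overline{B(M_0)})\cap\pi(\beta m,m]\big| .
\]
It then suffices to establish the ``survival'' bound
\[
\Exp_{\pi}\big|\OPT(A',\overline{B(M_0)})\cap\pi(\beta m,m]\big|\ \ge\ (1-\beta)\,\Exp_{\pi}\big|\OPT(A',\overline{B(M_0)})\big|-1 ,
\]
whereupon the two displays combine into the statement. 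Note that, unlike Lemma~\ref{l_exp_right_wings}, there is no sub-sampling factor: $M_2$ is run on the \emph{entire} last phase $\pi(\beta m,m]$, which is exactly why the leading factor is the bare $(1-\beta)$ rather than a ratio of phase lengths.

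For the survival bound I would expand $\Exp_{\pi}|\OPT(A',\overline{B(M_0)})\cap\pi(\beta m,m]|=\sum_{e}\Pr[\,e\in\OPT(A',\overline{B(M_0)})\text{ and }e\in\pi(\beta m,m]\,]$ and, for a fixed $e=ab$, reuse the measure-preserving coupling of Lemma~\ref{lemma:correlation} and Lemma~\ref{l_exp_right_wings}: the map $f$ fixes $\pi$ when $e\in\pi(\beta m,m]$ and otherwise deletes $e$ and re-inserts it at a uniformly random position of $(\beta m,m]$, so that $f(\pi)$ is uniform conditioned on $e\in\pi(\beta m,m]$. The target is a monotonicity statement: delaying $e$ into the last phase does not (much) decrease the probability that $e$ lies in $\OPT(A',\overline{B(M_0)})$. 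The two observations that make $e$ ``inert'' under the move are that $a\in A'\subseteq A(M_0)$ and $b\in\overline{B(M_0)}$ force $b\notin B(M_0)$, and since $B(M_1)\subseteq B(M_0)$ also $b\notin B(M_1)$; hence $e\notin M_0$ and $e\notin M_1$, so deleting $e$ changes neither greedy run by itself, and only the edges pushed across a phase boundary by the re-indexing can alter $M_0$, $M_1$ or $A'$.

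The main obstacle, and the place where this lemma is genuinely harder than Lemma~\ref{l_exp_right_wings}, is the perturbation bookkeeping, because the defining set $A'$ depends on $M_0$ \emph{and} on $M_1=\Greedy(F_1\cap\pi(\alpha m,\beta m])$, while $F_1$ is itself built from $M_0$. When $e$ starts in the second phase the move crosses only the boundary at $\beta m$: the single displaced edge is merely appended to the end of the phase-$2$ greedy run, so it changes $M_1$ by at most one edge and $A'$ by at most one vertex, yielding the clean ``$-1$'' exactly as before. The dangerous case is $e\in\pi[1,\alpha m]$: the re-indexing then also pushes an edge across the boundary at $\alpha m$ into the phase-$1$ window, where it may enter $M_0$, change $F_1$, and remove an already-matched edge from the front of the phase-$2$ greedy run, so that $M_1$---and with it $A'=\{a:\exists b\in B(M_1),\,ab\in M_0\}$---can change along a long augmenting path rather than by a single edge. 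I would attack this by introducing primed matchings $M_0'$ and $M_1'$ that undo the at-most-one displaced edge at each boundary, checking that $e$ survives in $\OPT(A'(M_1'),\overline{B(M_0')})$, and then bounding the difference between the primed and the true optima by a constant; certifying that the cascade $M_0\to F_1\to M_1\to A'$ really only shifts the optimum by $\Order(1)$ is the step I expect to demand the most care, and is the crux on which the clean constant in the statement rests.
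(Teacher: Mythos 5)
Your proposal coincides with the paper's proof on every step the paper actually carries out: the same greedy half-bound $\Exp_{\pi}|M_2|\ge\frac12\,\Exp_{\pi}|\OPT(A',\overline{B(M_0)})\cap\pi(\beta m,m]|$, the same remove-and-reinsert coupling $f$ onto the distribution conditioned on $e\in\pi(\beta m,m]$, the same structural observation that $e$ lies in neither $M_0$ nor $F_1$ (so deleting $e$ by itself disturbs neither greedy run), and the same treatment of the case $e\in\pi(\alpha m,\beta m]$, where the only side effect is that the edge $e'=a'b'$ formerly at position $\lfloor\beta m\rfloor+1$ may be appended to the phase-$2$ run; the paper's correction $A''=A'\setminus\{M_0(b')\}$ is exactly your ``changes $M_1$ by at most one edge and $A'$ by at most one vertex,'' and it is the sole source of the additive $-1$.

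Where the comparison becomes interesting is the case $e\in\pi[1,\alpha m]$, and it cuts both ways. You are right that this case is qualitatively different: removing $e$ also pushes the edge formerly at position $\lfloor\alpha m\rfloor+1$ into the first window, where it can join $M_0$ and hence alter $F_1$, while the phase-$2$ window simultaneously loses its front edge, which may have been in $M_1$. The paper's proof simply does not address this: its map corrects only for the crossing at $\beta m$ and tacitly treats $M_0$, $F_1$, and the interior of the phase-$2$ run as unchanged, so your worry identifies a real gap in the published argument rather than a step you failed to reconstruct. However, your proposed repair --- primed matchings plus a pointwise $\Order(1)$ bound on the resulting shift of the optimum --- cannot be carried out: a greedy matching is not stable under deletion of the first edge of its input, nor under a one-edge change of the filter $F_1$. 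Standard alternating-structure examples show such a perturbation can flip $M_1$ in $\Omega(n)$ positions, hence move $B(M_1)$, $A'$, and $|\OPT(A',\overline{B(M_0)})|$ by $\Omega(n)$ on specific orderings, so no per-ordering constant bound exists. This is precisely why the analogous argument in Lemma~\ref{l_exp_right_wings} is safe --- there the relevant quantity depends on $M_0$ alone, and under the coupling $M_0$ is only ever extended by the single appended edge, with no cascade --- whereas here the quantity depends on $M_1$, which the coupling modifies in its interior. Closing the phase-$1$ case requires a mechanism that exploits averaging over orderings (or a restructured decomposition), not a pointwise perturbation bound; neither your proposal nor the paper supplies one.
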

\begin{proof}
Since $\Greedy$ computes a maximal matching which is at least half the
size of a maximum matching, 
\begin{equation*}
\Exp_{\pi} |M_2|\geq \frac{1}{2} \Exp_{\pi}  |\OPT(A', \overline{B(M_0)})\cap \pi(\beta m,  m]|. 
\end{equation*}

Formally, we define a map $f$ from the uniform distribution on all orderings to the uniform distribution 
on all orderings such that $e\in\pi (\beta m, m] $: if $e\in \pi (\beta m, m] $ then $f(\pi)=\pi$ and otherwise 
$f(\pi)$ is the permutation obtained from $\pi$ by removing $e$ and re-inserting it at a position picked uniformly 
at random in $ (\beta m, m] $; in the latter case, if this causes an edge $e'=a'b'$, previously arriving at time 
$\lfloor \beta m\rfloor +1$, to now arrive at time $\lfloor \beta m\rfloor$ and to be added to $M_1$, we define 
$A''=A'\setminus \{ M_0(b') \}$; in all other cases we define $A''=A'$. 
Thus, if in $\pi$ we have $e \in \OPT(A', \overline{B(M_0)})$, then in $f(\pi)$ we have 
$e \in \OPT(A'', \overline{B(M_0)})$. Since the distribution of $f(\pi)$ is uniform conditioned on 
$e\in \pi(\beta m, m] $: 
\begin{eqnarray*}
\frac{ \Pr [ e\in \OPT(A'', \overline{B(M_0)}) \hbox{ and } e\in \pi(\beta  m,m] ] }{ \Pr [ e\in \pi(\beta m, m] ] }
\geq
 \Pr [e\in \OPT(A', \overline{B(M_0)})].
\end{eqnarray*}
%
Using $\Pr [ e\in \pi(\beta m, m] ] = 1-\beta$ and summing over $e$:
\begin{eqnarray*}
\Exp_{\pi}  |\OPT(A'', \overline{B(M_0)})\cap \pi(\beta m,  m]|\geq (1-\beta ) \Exp_{\pi}  |\OPT(A', \overline{B(M_0)})|. 
\end{eqnarray*}

Since $A'$ and $A''$ differ by at most one vertex, 
\begin{eqnarray*}
|\OPT(A'', \overline{B(M_0)})|\geq |\OPT(A', \overline{B(M_0)})|-1, 
\end{eqnarray*}
and the Lemma follows.  \qed
\end{proof}

We now present the proof of the main theorem, Theorem~\ref{theorem:one-pass-bip}.
\begin{theorem}
\label{theorem:one-pass-bip} Algorithm~\ref{l_algo1} is a deterministic one-pass semi-streaming
algorithm for \mbmshort{} with expected approximation ratio $\frac{1}{2} + 0.005$ 
against (uniform) random order for any graph,
and can be implemented with $\Order(1)$ update time. 
\end{theorem}

\begin{proof}
Assume that $\Exp_{\pi}|M_G| \leq (\frac{1}{2} + \epsilon) |M^*|$. 
By construction, every $e \in M_2$ completes a $3-$augmenting path, hence 
$ |M| \ge  |M_0| + |M_2|.$
In Lemma~\ref{lemma:exp} we show that $\Exp_{\pi} |M_0| \ge |M^*|(\frac{1}{2}-(\frac{1}{\alpha }-2) \epsilon)$. 
By Lemmas~\ref{l_left_wings} and~\ref{l_matching_size_left_wings}, $|M_2|$ can be related to $|M_1|$:
\begin{eqnarray*}
\Exp_{\pi} |M_2|  \geq \frac{1}{2} (1 - \beta) \Exp_{\pi} |\OPT(A', \overline{B(M_0)})| -\frac{1}{2}
 \ge \frac{1}{2} (1 -\beta ) (  \Exp_{\pi} |M_1| - 4 \epsilon |M^*|  ) -\frac{1}{2}. 
\end{eqnarray*}
By Lemmas~\ref{l_exp_right_wings} and~\ref{l_number_of_right_wings}, $|M_1|$ can be related to $|M^*|$:
\begin{eqnarray*}
 \Exp_{\pi} |M_1| & \geq & \tfrac{1}{2} (\beta-\alpha ) \Exp_{\pi} |\OPT(\overline{A(M_0)}, B(M_0)| -O(1) \\
& \ge & \tfrac{1}{2} (\beta-\alpha )(  |M^*|(\frac{1}{2}-(\frac{1}{\alpha}+2) \epsilon)  )-O(1).
\end{eqnarray*}

%
Combining,\\
\begin{eqnarray*}
\Exp_{\pi} |M| \ge \hspace{7cm} \quad \\
|M^*| ( \tfrac{1}{2} - (\tfrac{1}{\alpha } -2 ) \epsilon + \tfrac{1}{2}(1 -\beta ) ( \tfrac{1}{2} (\beta-\alpha)(\tfrac{1}{2} - (\tfrac{1}{\alpha} + 2) \epsilon) - 4 \epsilon ) ) -O(1).
\end{eqnarray*}
The expected value of the output of the Algorithm is at least $\min_\epsilon \max \{ (\frac{1}{2} + \epsilon) |M^*|, \Exp_{\pi} |M| \}$.
We set the right hand side of the above Equation equal to $(\frac{1}{2} + \epsilon) |M^*|$.
By a numerical search we optimize parameters $\alpha, \beta$. 
Setting $\alpha = 0.4312$ and $\beta = 0.7595$, we obtain $\epsilon \approx 0.005$ 
which proves the Theorem.  \qed
\end{proof}

\subsection{Extension to General Graphs} \label{section:one-pass-general}
In this section, we show how the one-pass algorithm of Section~\ref{section:one-pass-bipartite} can be adapted
to general graphs $G = (V, E)$.

\subsubsection{Algorithm}
Algorithm~\ref{l_algo4} follows the same line as Algorithm~\ref{l_algo1} for the bipartite case.
While in the bipartite case, edges from $M_1$ extend $M_0$ on only one bipartition, and those edges do not
interfere with edges from $M_2$, this structure is no longer given in the general setting. Here, $M_1$
is a $\Greedy$ matching between the matched vertices in $M_0$ and all free vertices. This may already
produce some $3$-augmenting paths, however, it may also happen that by taking a {\em bad} edge into $M_1$, 
this rules out any possibility of finishing the $3$-augmenting paths containing these edges. We call the edge 
of $M_0$ \textit{blocked} if it can not be completed to a $3$-augmenting path, see Definition~\ref{def_blocked_edge}.

\begin{figure}[ht]
\centerline{\scalebox{0.9}{
 \begin{tikzpicture}[node distance = 1.5 cm]
     \GraphInit[vstyle=normal]
      \Vertex[x=0, y=0]{a}
      \Vertex[x=2.5, y=0]{b}
      \Vertex[x=5, y=0]{c}
      \Vertex[x=7.5, y=0]{d}
      \SetUpEdge[style={ultra thick}, color=red]
      \Edge[label=$\in M_0$](b)(c)          
      \SetUpEdge[style={dashed}, color=black]
      \Edge(a)(b)          
      \Edge(c)(d)          
      \SetUpEdge[style={solid,bend left=45},color=black]
      \Edge(b)(d)
  \end{tikzpicture}
}}
\caption[Example of a Blocked Edge]{If edge $bd$ is taken into $M_1$ and edge $ac \notin E$, this may block the $3$-augmenting path $ab,bc,cd$. In that case we call $bc$ blocked.}
\end{figure}
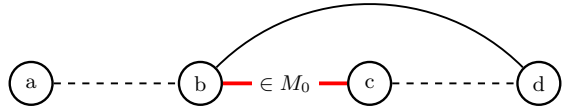

We show in Lemma~\ref{l_few_blocked_edges} that the probability that an edge of $M_0$ will become blocked is at most $1/2$. 
This guarantees that we can finalize many $3$-augmenting paths by the $\Greedy$ matching $M_2$.

\begin{algorithm}[ht] \small
\caption{One-pass Matching on Random Order for General Graphs \label{l_algo4}}
\begin{algorithmic}[1]
\STATE $\alpha \gets 0.413, \beta \gets 0.708$
\STATE $M_G \gets \Greedy(\pi)$
\STATE $M_0 \gets \Greedy(\pi[1,\alpha m])$,  matching obtained by $\Greedy$ on the first $\lfloor \alpha m \rfloor$ edges
\STATE $F_1 \gets $ complete bipartite graph between $V(M_0)$ and $\overline{V(M_0)}$
\STATE $M_1 \gets \Greedy(F_1 \cap\pi(\alpha m, \beta m])$,  matching obtained by Greedy on edges $\lfloor \alpha m\rfloor +1$ through $\beta m$ that intersect $F_1$
\STATE $Aug \gets $ length $3$ paths in $M_0 \oplus M_1$
\STATE $V_1 \gets \{ u \in V \setminus V(Aug) \, | \, \exists v \in V(M_1): uv \in M_0 \}$
\STATE $V_2 \gets \overline{V(M_0)} \setminus V(Aug)$
\STATE   $F_2 \gets $  maximal bipartite graph between $V_1$ and $V_2$ such that $\nexists\,m_0 
\in M_0 \setminus Aug, m_1 \in M_1 \setminus Aug, f_2 \in F_2$ st. they form a triangle
\STATE $M_2 \gets \Greedy(F_2 \cap \pi(\beta m, m])$,  matching obtained by $\Greedy$ on edges $\lfloor \beta m\rfloor +1$ through $m$ that intersect $F_2$
\STATE  $M\gets $  matching obtained from $M_0$ augmented by $M_1 \cup M_2$
\RETURN larger of the two matchings  $M_G$ and $M$
\end{algorithmic}
\end{algorithm}

$Aug$ is a set of length $3$ paths. $|Aug|$ denotes the number of length $3$ paths in $Aug$.
For some vertex $a \in V$ (resp. some edge $e \in E$), we write $a \in Aug$ (resp. $e \in Aug$) 
if $a$ (resp. $e$) is part of some length $3$ path.

\subsubsection{Analysis}
We bound the size of a maximum matching between $V(M_0)$ and $\overline{V(M_0)}$.

\begin{lemma}
 \label{l_f1_size}
Assume that $\Exp_\pi |M_G|\leq (\frac{1}{2} + \epsilon) |M^*|$. Then:
 $$\Exp | \MM(V(M_0), \overline{V(M_0))} | \ge |M^*| (1 - 2(\frac{1}{\alpha}+2) \epsilon) .$$
\end{lemma}

\begin{proof}
 The size of a maximum matching between $\overline{V(M_0)}$ and $ V(M_0)$ 
is at least twice the number of augmenting paths of length 3 in $M_0\oplus M^*$. 
By Lemma~\ref{lemma:augmentable_edges}, in expectation, the number of augmenting 
paths of length $3$ in $ M_G  \oplus M^*$ is at least 
$ (\frac{1}{2} - 3 \epsilon)|M^*|$.
All of those are augmenting paths of length $3$ in $M_0\oplus M^*$, 
except for at most $|M_G| - |M_0|$. Hence, in expectation, $M_0$ contains
$(\frac{1}{2} - 3 \epsilon)|M^*| - (\Exp |M_G| - \Exp |M_0|)$ edges that are
$3$-augmentable. Lemma~\ref{lemma:exp} applied to $M_0$ concludes the proof.  \qed
\end{proof}

\begin{lemma}
 \label{l_m1_size}
Assume that $\Exp_\pi |M_G|\leq (\frac{1}{2} + \epsilon) |M^*|$. Then:
 $$\Exp |M_1| \ge \frac{1}{2}(\beta - \alpha) (\Exp | \MM(V(M_0), \overline{V(M_0)}) | - \frac{1}{1-\alpha}).$$
\end{lemma}

\begin{proof}
 The proof is identical to the proof of Lemma~\ref{l_exp_right_wings}. \qed
\end{proof}

\begin{definition}[Blocked edge]
\label{def_blocked_edge}
 Let $e=uv \in M_0$ such that $e$ is $3$-augmentable by edges $o_1=uu', o_2=vv' \in M^*$.
 We call $e$ \textit{blocked}, if:
\begin{enumerate}
 \item either $uv' \in E$ or $u'v \in E$ (not both of them), and
 \item if $uv' \in E$ then $uv' \in M_1$, otherwise $u'v \in M_1$.
\end{enumerate}
\end{definition}

\begin{lemma}
\label{l_few_blocked_edges}
 $$\Pr [e \mbox{ blocked } \, | \, e \in M_0 ] \le \frac{1}{2}.$$
\end{lemma}

\begin{proof}
 W.l.o.g. let $uv' \in E$ and $u'v \notin E$. 
\begin{eqnarray*}
 \Pr [e \mbox{ blocked } \, | \, e \in M_0 ] & = & \Pr [e \notin Aug \mbox{ and } uv' \in M_1 \, | \, e \in M_0 ] \\
 & \le & \Pr [uv' \in M_1 \, | \, e \in M_0 \setminus Aug].
\end{eqnarray*}
Since $\Pr [uv' \in M_1 \, | \, e \in M_0 \setminus Aug] = \Pr [vv' \in M_1 \, | \, e \in M_0 \setminus Aug]$, and 
since the events $(uv' \in M_1 \, | \, e \in M_0 \setminus Aug)$ and $(vv' \in M_1 \, | \, e \in M_0 \setminus Aug)$ 
exclude each other, the result follows. \qed
\end{proof}

\begin{lemma}
 \label{l_f2_size}
 \begin{equation*}
\Exp |\MM(F_2)| \ge \max \{ \frac{1}{2}  (\Exp |M_1| - 4 |Aug| - 4\epsilon |M^*|), 0 \} .  
 \end{equation*}
\end{lemma}

\begin{proof}
The size of a maximum matching in $F_2$ is at least the number of length $2$ paths in $M_0 \oplus M_1$ 
that can be completed to a $3$-augmenting path.
Denote by $k_2$ the number of length two paths in $M_0 \oplus M_1$. 
Then, $|M_1| = 2|Aug| + k_2$. A length $3$ path may block at most $2$ other length 2 
paths from being completed. 

By Lemma~\ref{lemma:augmentable_edges}, the number of edges of $|M_G|$ that are not $3$-augmentable is
in expectation at most $(\frac{1}{2} + \epsilon) |M^*| - (\frac{1}{2} - 3\epsilon)|M^*| = 4 \epsilon |M^*|$.
Since $M_0$ is a subset of $M_G$, it follows that at most $4 \epsilon |M^*|$ edges from $M_0$ are 
not $3$-augmentable. Hence, the number of $M_0$ edges for which a length two path was 
found and which is $3$-augmentable is at least $(k_2 - 2 |Aug| - 4\epsilon |M^*|)$. 
In expectation, by Lemma~\ref{l_few_blocked_edges}, at most half of these edges are blocked. The Lemma follows. \qed
\end{proof}

\begin{lemma}
 \label{l_m2_size}
 \begin{equation*}
\Exp |M_2| \ge \frac{1}{2} \left( (1-\beta) \Exp |\MM(F_2)| - 1 \right).  
 \end{equation*}
\end{lemma}
\begin{proof}
 This proof is identical to the proof of Lemma~\ref{l_left_wings}. \qed
\end{proof}

We now present the proof of the main theorem, Theorem~\ref{theorem:one-pass-gen}.
\begin{theorem}
 \label{theorem:one-pass-gen}
  Algorithm~\ref{l_algo4} is a deterministic one-pass semi-streaming algorithm 
for \mm~with approximation ratio $\frac{1}{2} + 0.00363$ in expectation over (uniform)
random order for any graph, and can be implemented with $\Order(1)$ update
time.
\end{theorem}

\begin{proof}
 The expected matching size is
\begin{equation}
 \Exp |M| \ge \Exp |M_0| + |Aug| +  \frac{1}{2} \Exp |M_2|, \label{eqn:2011}
\end{equation}
since, by construction, at least half of the edges of $M_2$ can be used to complete a $3$-augmenting path.
Firstly, we bound $|M_2|$ by Lemma~\ref{l_m2_size} and Lemma~\ref{l_f2_size} and we obtain
\begin{equation}
 \Exp |M_2| \ge \max \{ 0, (1-\beta)(\frac{1}{4}  \Exp |M_1| - |Aug| - \epsilon |M^*|) - \Order(1) \}. \label{eqn:5921}
\end{equation}
By Lemma~\ref{l_m1_size} and Lemma~\ref{l_f1_size}, we bound the size of $M_1$ and we obtain
\begin{equation}
 \Exp |M_1|  \ge \frac{1}{2} |M^*| (\beta - \alpha) (1 - 2(\frac{1}{\alpha}+2) \epsilon) - \Order(1). \label{eqn:4441}
\end{equation}
Using Inequality~\ref{eqn:4441} in Inequality~\ref{eqn:5921}, we obtain
\begin{equation}
 \Exp |M_2| \ge \max\{0, (1-\beta)(\frac{1}{8} |M^*| \left( (\beta - \alpha) (1 - 2(\frac{1}{\alpha}+2) \epsilon) -\epsilon \right) - |Aug|) - \Order(1) \}. \label{eqn:3911}
\end{equation}
Furthermore, in Lemma~\ref{lemma:exp} we show that $\Exp_{\pi} |M_0| \ge |M^*|(\frac{1}{2}-(\frac{1}{\alpha }-2) \epsilon)$. 
We use this and Inequality~\ref{eqn:3911} in Inequality~\ref{eqn:2011} and we obtain an Inequality for $\Exp |M|$ that depends
on $\alpha, \beta, |Aug|$ and $\epsilon$. It is easy to see that this Inequality is minimized if $|Aug| = 0$. 

The expected value of the output of the Algorithm is at least $\min_\epsilon \max \{ (\frac{1}{2} + \epsilon) |M^*|, \Exp_{\pi} |M| \}$.
By a numerical search we optimize parameters $\alpha, \beta$. Setting $\alpha = 0.413, \beta = 0.708$, we obtain $\epsilon \approx 0.00363$.
which proves the Theorem.  \qed

\end{proof}

\section{Randomized Two-pass Algorithm on any Order} \label{section:matching-two-pass-randomized}

We present now a randomized two-pass semi-streaming algorithm for \mbm{}
with approximation ratio strictly greater than $\frac{1}{2}$. This algorithm simulates the three passes of the 
$3$-pass algorithm of Section~\ref{section:matching-3-pass}
in two passes. We require a new property of the $\Greedy$ algorithm that may be of independent interest.
In Subsection~\ref{section:matching-subset}, we discuss this new property. Then, we present in Subsection~\ref{subsection:matching-two-passes-rand-bip}
our two-pass randomized algorithm for bipartite graphs.

\subsection{Matching Many Vertices of a Random Vertex Subset} \label{section:matching-subset}

Consider a bipartite graph $G=(A, B, E)$. For a fixed parameter $0<p\leq 1$, Algorithm~\ref{algo:matching-random-subset}
generates an independent random sample of vertices $A'\subseteq A$ such that $\Pr [a \in A'] = p$, for all $a\in A$,
and runs then the Greedy algorithm on the subgraph $G|_{A' \times B}$.

\begin{algorithm} 
\caption{Matching a Random Subset of Vertices (Bipartite Graphs) \label{algo:matching-random-subset}}
\begin{algorithmic}[1]
\STATE Take independent random sample $A'\subseteq A$ st. $\Pr [a \in A'] = p$, for all $a\in A$
\STATE Let $F$ be the complete bipartite graph between $A'$ and $B$
 \RETURN $M' = \Greedy(F\cap \pi)$ 
\end{algorithmic}
\end{algorithm}

We prove in Theorem~\ref{theorem:subset-match} that
the greedy algorithm restricted to the edges with an endpoint 
in $A'$ will output a matching of expected approximation ratio $p/(1+p)$, 
compared to a maximum matching $\OPT(G)$ over the full graph $G$.
Since, in expectation, the size of $A'$ is $p|A|$, one can roughly say
that a fraction of $1/(1+p)$ of vertices in $|A'|$ has been matched.

The proof of Theorem~\ref{theorem:subset-match} will use Wald's equation for super-martingales, see \cite{mu05},
Wald's Equation, p.300, section 12.3.\footnote{The theorem cited in the book is actually 
weaker than the one we need, but our statement follows from the proof of that 
Theorem.} 

\begin{lemma}[Wald's equation]  \label{lemma:wald} 
Consider a process described by a sequence of random states $(S_i)_{i\geq 0}$ and let $D$ be a random stopping time for the process, 
such that $\Exp D <\infty$. Let $(\Phi(S_i))_{i\geq 0}$ be a sequence of random variables for which there exist $c, \mu$ such that
\begin{enumerate}
 \item $\Phi(S_0)=0$;
 \item $\Phi (S_{i+1})-\Phi(S_i) <c$ for all $i<D$; and 
 \item $\Exp [ \Phi (S_{i+1})-\Phi (S_i) \, | \, S_i ]\leq \mu$ for all $i<D$.
\end{enumerate}
Then:
$$\Exp \Phi(S_D) \leq \mu \Exp D.$$
\end{lemma}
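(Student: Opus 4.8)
The plan is to telescope $\Phi(S_D)$ into a sum of increments and apply a Tonelli-type interchange, being careful that the hypotheses bound the increments only from above (this one-sidedness is exactly what makes our statement stronger than the textbook one, which assumes a two-sided bound $|\Phi(S_{i+1})-\Phi(S_i)|<c$). Write $\mathcal{F}_i$ for the natural filtration generated by $S_0,\dots,S_i$, and set $X_i=\Phi(S_i)-\Phi(S_{i-1})$, so that by condition~1 we have $\Phi(S_D)=\sum_{i=1}^{D}X_i=\sum_{i\ge 1}X_i\,\chi[D\ge i]$. Since $D$ is a stopping time, the event $\{D\ge i\}=\{D\le i-1\}^c$ is determined by $S_0,\dots,S_{i-1}$, hence is $\mathcal{F}_{i-1}$-measurable; this is the feature that will let me pull $\chi[D\ge i]$ out of conditional expectations. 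I read condition~3 as $\Exp[X_i\mid\mathcal{F}_{i-1}]\le\mu$ on the event $\{D\ge i\}=\{i-1<D\}$, i.e.\ conditioning on the whole history, which is the relevant interpretation for a stopping time.

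First I would record that $\Exp\,\Phi(S_D)$ is well defined. Condition~2 gives $X_i<c$ for $i\le D$, so the positive part obeys $\Phi(S_D)^+\le\sum_{i=1}^{D}X_i^+\le|c|\,D$, whence $\Exp\,\Phi(S_D)^+\le|c|\,\Exp D<\infty$ using $\Exp D<\infty$. Thus $\Exp\,\Phi(S_D)$ is meaningful in $[-\infty,+\infty)$, and this controlled positive part is the only role played by the \emph{upper} bound $c$.

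The key move is to reroute the computation through the nonnegative ``slack'' variables, so that no interchange of a non-absolutely-convergent series is ever required. Define $Y_i=(c-X_i)\,\chi[D\ge i]$; for $i\le D$ condition~2 gives $c-X_i>0$, and for $i>D$ the indicator vanishes, so $Y_i\ge 0$ always. Because $D<\infty$ almost surely, $\sum_{i\ge1}Y_i=\sum_{i=1}^{D}(c-X_i)=cD-\Phi(S_D)$ is a finite sum, and Tonelli's theorem applies to the nonnegative $Y_i$ to give $\Exp\sum_{i\ge1}Y_i=\sum_{i\ge1}\Exp\,Y_i$. Conditioning on $\mathcal{F}_{i-1}$, using that $\chi[D\ge i]$ is $\mathcal{F}_{i-1}$-measurable and that condition~3 gives $\Exp[X_i\mid\mathcal{F}_{i-1}]\le\mu$ on $\{D\ge i\}$, I obtain
$$\Exp\,Y_i=\Exp\big[\chi[D\ge i]\,(c-\Exp[X_i\mid\mathcal{F}_{i-1}])\big]\ge(c-\mu)\Pr[D\ge i].$$

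Finally I would assemble the pieces. Summing over $i$ and using $\sum_{i\ge1}\Pr[D\ge i]=\Exp D$ yields $\Exp\sum_{i\ge1}Y_i\ge(c-\mu)\Exp D$, and combining with $\sum_{i\ge1}Y_i=cD-\Phi(S_D)$ gives
$$\Exp\,\Phi(S_D)=c\,\Exp D-\Exp\sum_{i\ge1}Y_i\le c\,\Exp D-(c-\mu)\Exp D=\mu\,\Exp D,$$
as claimed; note the conclusion is insensitive to the sign of $c-\mu$. The main obstacle is precisely the absence of a lower bound on the increments: the naive termwise identity $\Exp\sum_i X_i\chi[D\ge i]=\sum_i\Exp[X_i\chi[D\ge i]]$ is not justified for a merely one-sidedly bounded, possibly non-integrable series, and the remedy is to let Tonelli perform all the interchanging on the nonnegative slacks $c-X_i$, with the stopping-time measurability of $\{D\ge i\}$ handling the conditioning step.
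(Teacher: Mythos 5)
Your proof is correct, and it is worth noting that it does not parallel the paper, because the paper offers no proof of this lemma at all: it cites Section 12.3 of Mitzenmacher and Upfal \cite{mu05}, with a footnote conceding that the theorem stated there is weaker than what is needed and that the present version ``follows from the proof of that theorem.'' Your argument supplies exactly the strengthening that the paper delegates to the reader. The textbook proof writes $\Phi(S_D)=\sum_{i\ge 1}X_i\,\chi[D\ge i]$ and interchanges expectation and summation via Fubini/dominated convergence, which requires two-sided control of the increments (a bound on $|X_i|$, hence integrability of the series); under the hypotheses here only an upper bound $X_i<c$ is available, so that interchange is not justified, and indeed $\Exp\Phi(S_D)$ might even be $-\infty$. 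Your rerouting through the nonnegative slacks $Y_i=(c-X_i)\,\chi[D\ge i]$, to which Tonelli applies unconditionally, is precisely the adjustment that makes the argument go through one-sidedly and matches the one-sided conclusion $\Exp \Phi(S_D)\leq \mu \Exp D$; the stopping-time measurability of $\{D\ge i\}$ plays the same role in both arguments. Two finer points are also handled correctly: reading condition~3 as conditioning on the full history is the intended reading (in the paper's application $S_i$ is explicitly defined to be the entire sequence of arrivals up to time $i$, so conditioning on $S_i$ is conditioning on $\mathcal{F}_i$), and your observation that the conclusion holds trivially when $\Exp\sum_i Y_i=+\infty$ (since then $\Exp\Phi(S_D)=-\infty$) legitimizes the final linearity step in all cases.
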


\begin{theorem}
\label{theorem:subset-match}
Let $0 < p \le 1$, let $G = (A, B, E)$ be a bipartite graph. Let $A'$ 
be an independent random sample $A' \subset A$ such that $\Pr [a \in A'] = p$, for all $a\in A$.
Let $F$ be the complete bipartite graph between $A'$ and $B$
Then for any input stream $\pi \in \Pi(G)$:
\begin{equation*}
\displaystyle \Exp_{A'} |\Greedy(F \cap \pi)| \ge \frac{p}{1+p} |\OPT(G)|. 
\end{equation*}
\end{theorem}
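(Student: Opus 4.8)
The plan is to fix a maximum matching $M^* = \OPT(G)$ and charge its edges against $M' = \Greedy(F\cap\pi)$. Write $A^* = A(M^*)$ for the $A$-endpoints of $M^*$, and note first that every $A$-vertex covered by $M'$ lies in $A'$, so $|M'|$ equals the number of matched vertices of $A'$. I would classify each edge $a_ib_i\in M^*$ with $a_i\in A'$ into two groups: those for which $a_i$ is matched by $M'$ (call their number $n_1$), and those for which $a_i$ is left unmatched (call it $n_2$). For the second group I would observe that, since $a_ib_i\in F$ and hence appears in $\pi$, $\Greedy$ must process it; as $a_i$ stays unmatched, at that instant $b_i$ must already be matched, so $b_i$ is covered by $M'$. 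Because the $b_i$ are distinct and the matched vertices of $A'$ number exactly $|M'|$, this gives $n_1\le |M'|$. Since $\Exp_{A'}[n_1+n_2] = \Exp_{A'}|A'\cap A^*| = p|M^*|$, the theorem reduces to the single inequality $\Exp_{A'} n_2 \le p\,\Exp_{A'}|M'|$, which then yields $p|M^*| \le (1+p)\,\Exp_{A'}|M'|$.

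To prove $\Exp n_2 \le p\,\Exp|M'|$ I would apply Wald's equation (Lemma~\ref{lemma:wald}) to the process that builds $M'$. Think of the inclusion coins $\{[a\in A']\}_{a\in A}$ as revealed lazily during the run, let $S_l$ be the state just after the $l$-th edge is inserted into $M'$, and take as stopping time $D=|M'|$, the number of insertions (bounded by $n$, so $\Exp D<\infty$). I would design the potential $\Phi$ so that $\Phi(S_D)$ counts exactly the $n_2$-edges: when $\Greedy$ inserts an edge that matches some $b_i\in B(M^*)$, I reveal the coin of its partner $a_i$ at that instant, and let $\Phi$ record a block precisely when $a_i\in A'$ and $a_i$ is still unmatched. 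Each insertion matches a single $b$, which has at most one $M^*$-partner, so the increments are bounded (condition~2 of Lemma~\ref{lemma:wald}), and $\Phi(S_0)=0$. The point of revealing $a_i$'s coin only at this insertion is that it is then a fresh $\mathrm{Bernoulli}(p)$ coin, independent of $S_l$, so the conditional expected increment is at most $p$ (condition~3 with $\mu=p$). Lemma~\ref{lemma:wald} then gives $\Exp n_2 = \Exp\Phi(S_D)\le p\,\Exp D = p\,\Exp|M'|$, which closes the reduction.

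The step I expect to be the crux is verifying condition~3, namely that each greedy insertion creates at most $p$ new blocked $M^*$-edges in conditional expectation. The factor $p$ must come from a coin that is genuinely fresh at the moment of charging, and the obstacle is that $\Greedy$ may be forced to reveal $a_i$'s coin earlier, the first time it processes any edge incident to $a_i$; this couples the event $a_i\in A'$ with the current matching state and can make a naive ``current block count'' potential have strictly positive conditional increments. Handling this is exactly why the super-martingale form of Wald's equation is invoked rather than a one-line independence argument: I would either arrange the revelation schedule so that the partner coin responsible for a block is always revealed fresh, or equip $\Phi$ with compensating decrements (an $a_i$ counted as blocked but later matched by a subsequent insertion releases its block), and then show the amortized conditional increment stays at most $p$. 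Getting this bookkeeping exactly right, so that conditions~1--3 hold simultaneously, is the technical heart of the proof.
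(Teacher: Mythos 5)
Your proposal is correct and is essentially the paper's own proof: the paper likewise fixes $M^*$, reveals the sampling coins lazily, and applies Wald's equation (Lemma~\ref{lemma:wald}) to a potential tracking how edges of $M^*$ get ruined, merging your two bounds $n_1 \le |M'|$ and $\Exp n_2 \le p\,\Exp|M'|$ into a single potential (the number of ``dead'' vertices) with conditional increment at most $1+p$. The crux you flag is resolved exactly as you suggest, via the invariant that a vertex which is still unmatched but whose coin $\Greedy$ has already been forced to reveal must lie outside $A'$; hence any vertex that can be charged a block has a coin that is still unrevealed at charging time, and is therefore fresh and worth exactly $p$.
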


\begin{proof}
Let $M' = \Greedy(F \cap \pi)$.  For $i\leq |M'|$, 
denote by $M'_i$ the first $i$ edges of $M'$, in the order in which they were added to $M'$ during the execution of Greedy. 

Let $M^*$ be a fixed maximum matching in $G$ and let $M_F$ denote the edges of $M^*$ that are in $F$.   
Let $A'' = A(M_F)$ denote the vertices of $A'$ matched by $M_F$. Consider a vertex $a \in A''$ and its match $b$ in matching $M_F$. We say that $a$ is \textit{live} with
respect to $M'_i$ if both $a$ and $b$ are  unmatched in 
$M'_i$. A vertex that is not live is \textit{dead}. 
Furthermore, we say that an 
edge of $M'_{i+1}\setminus M'_i$ \textit{kills} a  vertex $a$ if $a$ is
live with respect to $M'_i$ and dead with respect to $M'_{i+1}$.  

We use Lemma~\ref{lemma:wald}. Here, by ``time``, we mean the number of edges in $M'$, so between time $i-1$ and time $i$, during the execution of Greedy,
several edges arrive and all are rejected except the last one which is added to $M'$. 
We use a potential function $\phi(i)$ 
which we define as the number of dead vertices  with respect to $M'_i$. We define the stopping
time $D$ as the first time when the event  $\phi (i)=|A''|$ holds.

We only need to check that the three assumptions of the Stopping Lemma hold. First, initially  all nodes of $A''$ are live, so $\phi(0)=0$.
Second, the potential function $\phi$ is  non-decreasing and uniformly bounded: since adding an edge to $M'$ can kill at most two vertices of $A''$, 
we always have 
$\Delta \phi (i) := \phi(i+1)-\phi(i)\leq 2$. 
Third, let $S_i$ denote the state of the process at time $i$,
namely the information about the entire sequence of edge arrivals up to that time, 
hence, in particular, the set of $i$ edges currently in $M'$.
Observe that, here, $G$ and $M^*$ are fixed.
Then $D$ is indeed a stopping time, since the event $D\geq i+1$ can be inferred from the knowledge of $S_i$.
We now claim that:
\begin{equation}
\label{l_decrease_of_phi}
\Exp (\Delta \phi(i) ~|~ S_i)\leq 1+p. 
\end{equation}
Indeed, since $\Delta\phi(i)$ only takes on values 0, 1 or 2, we can write that 
\begin{equation*}
\Exp (\Delta \phi(i) ~|~ S_i) \le  1+\Pr[\Delta\phi (i)=2 ~|~ S_i].  
\end{equation*}

To bound the latter probability, let $e=ab$ denote the edge of $M'_{i+1} \setminus M'_i$ and let $t$ be such that $e = \pi[t]$. 
In order for $e$ to change $\phi$ by $2$, it must be that $b$ is matched in $M^*$ to a node $a'$ that is also in $A''$.
Furthermore, it is required that $a'$ was unmatched before edge $e$ arrived. 
Since $a'$ was unmatched up to arrival $t$, no edge $a'b'$ had been seen among the first $t$ edges of stream $\pi$, such that $b'$ was free at arrival time (of $a'b'$).
Thus
\begin{eqnarray*}
& &  \Pr [ \Delta\phi (i)=2 ~|~ S_i]\leq \\
& & \hspace{1cm} \Pr [ a'\in A'\mbox{ and $\nexists a'b' \in \pi[1,t]$ st. $b'$ was free when $a'b'$ arrived} ~|~ S_i].
\end{eqnarray*}
Now, given that no edge $f=a'b'$ arrived before $t$ such that $b'$ was free when $a'b'$ arrived,
the outcome of the random coin determining whether $a' \in A'$ was never looked at, 
and could have been postponed until $t$. Thus
\begin{eqnarray*}
 & & \Pr[a' \in A' \, | \, (\mbox{$\nexists a'b' \in \pi[1,t]$ such that $b'$ was free when $a'b'$ arrived}, S_i ) ] = \\
 & & \hspace{1cm}  \Pr [a'\in A']=p,
\end{eqnarray*}
implying Inequality~\ref{l_decrease_of_phi}.
Applying  Wald's Stopping Lemma, we obtain 
\begin{eqnarray*}
\Exp \phi(D) \leq (1+p) \Exp D. 
\end{eqnarray*}
Finally, since $ \Exp \phi(D) = \Exp |A''| = p\cdot |\OPT(G)|$ and  
$\mbox{$D\leq  |\Greedy(F\cap \pi)|$}$, and the Theorem follows.  \qed
\end{proof}

\subsection{A Randomized Two-pass Algorithm for Bipartite Graphs} \label{subsection:matching-two-passes-rand-bip}
Based on Theorem~\ref{theorem:subset-match}, we design our randomized two-pass algorithm for bipartite graphs $G=(A, B, E)$.
Assume that $\Greedy(\pi)$ returns a matching that is close to 
a $\frac{1}{2}$-approximation. In order to apply Theorem~\ref{theorem:subset-match}, 
we pick an independent random sample $A' \subseteq A$ such that $\Pr[a \in A'] = p$ for all $a$. 
In a first pass, our algorithm computes a $\Greedy$ matching $M_0$ of $G$, and a $\Greedy$ matching $M'$
between vertices of $A'$ and $B$. $M'$ then contains some edges that form parts of $3$-augmenting
paths for $M_0$: see Figure~\ref{figure:two-pass-rand-bip-2} and Figure~\ref{figure:two-pass-rand-bip} 
for an illustration. Let $M_1 \subset M'$ be the set of those edges. 
It remains to complete these length $2$ paths $M_0 \cup M_1$ in a second pass by a further $\Greedy$ matching $M_2$.
In the prove of Theorem~\ref{l_theorem2}, we show that if $\Greedy(\pi)$ is close to a $\frac{1}{2}$-approximation, then we 
find many $3$-augmenting paths.

\begin{algorithm}
\caption{Two-pass Randomized Bipartite Matching Algorithm \label{algorithm:two-pass-rand-bip}}
\begin{algorithmic}[1]
\STATE Let $p\gets \sqrt{2} - 1$.   
\STATE Take an independent random sample $A' \subseteq A$ st. $\Pr [a \in A'] = p$, for all $a\in A$
\STATE  Let $F_1$ be the set of edges with one endpoint in $A'$. 
 \STATE \textbf{First pass: } $M_0 \gets \Greedy(\pi)$ and $M' \gets \Greedy(F_1\cap \pi)$
 \STATE $M_1\gets \{ e \in M' \, | \, e \mbox{ goes between } B(M_0)  \mbox{ and } \overline{A(M_0)} \}$
 \STATE $A_2 \gets \{ a \in A(M_0): \exists b,c: ab \in M_0 \mbox{ and } bc \in M_1 \}$. 
 \STATE  Let $F_2\gets \{ da:  d\in \overline{B(M_0)}\mbox{ and } a \in A(M_0)\mbox{ and }\exists b,c: ab \in M_0 \mbox{ and } bc \in M_1  \}$.
 \STATE \textbf{Second pass: } $M_2 \gets \Greedy(F_2\cap \pi)$
 \STATE Augment $M_0$ by edges in $M_1$ and $M_2$ and store it in $M$
 \RETURN the resulting matching $M$
\end{algorithmic}
\end{algorithm}

\begin{figure}[h!]
\centerline{\scalebox{0.9}{
\begin{tikzpicture}[node distance = 1.5 cm]
    \GraphInit[vstyle=normal]
\draw(-2.0, 8.75) node{$3$-augmentable edges};
\draw(-2.0, 5.5) node{other edges};
\draw(2.25, 11) node{\textbf{$M_0$}};
\draw(4.35, 11) node{\textbf{$M_1 \subseteq M'$}};
\draw(0, 10.5) node{\textbf{B}};
\draw(1.5, 10.5) node{\textbf{A}};
\draw(3, 10.5) node{\textbf{B}};
\draw(4.5, 10.5) node{\textbf{A}};
\draw(2.25, 7) node{$\vdots$};
\draw(2.25, 5.5) node{$\vdots$};
\draw(8.3, 9) node{vertex $\in A'$};
\draw(8.3, 8) node{edge $\in M' \setminus M_1$};
\draw(8.3, 8.5) node{edge $\in M_1$};
\draw(8.3, 7.5) node{edge $\in M^*$};
\tikzset{VertexStyle/.style = {shape = circle, color = black, fill = white, minimum size = 0.1cm, draw}}
\Vertex[x=0, y=10, L=$$]{d1} \Vertex[x=1.5, y=10, L=$$]{a1} \Vertex[x=3, y=10, L=$$]{b1} 
\Vertex[x=0, y=9.5, L=$$]{d2} 				 \Vertex[x=3, y=9.5, L=$$]{b2} \Vertex[x=4.5, y=9.5, L=$$]{c2}
\Vertex[x=0, y=9, L=$$]{d3}  \Vertex[x=3, y=9, L=$$]{b3} \Vertex[x=4.5, y=9, L=$$]{c3}                
\Vertex[x=0, y=8.5, L=$$]{d4} \Vertex[x=1.5, y=8.5, L=$$]{a4} \Vertex[x=3, y=8.5, L=$$]{b4}                 
\Vertex[x=0, y=8, L=$$]{d5} \Vertex[x=1.5, y=8, L=$$]{a5} \Vertex[x=3, y=8, L=$$]{b5} \Vertex[x=4.5, y=8, L=$$]{c5}                
\Vertex[x=0, y=7.5, L=$$]{d6} 				 \Vertex[x=3, y=7.5, L=$$]{b6} \Vertex[x=4.5, y=7.5, L=$$]{c6}
\Vertex[x=0, y=6.5, L=$$]{d7} \Vertex[x=1.5, y=6.5, L=$$]{a7} \Vertex[x=3, y=6.5, L=$$]{b7} \Vertex[x=4.5, y=6.5, L=$$]{c7}
\Vertex[x=1.5, y=6, L=$$]{a8} \Vertex[x=3, y=6, L=$$]{b8} 
\Vertex[x=1.5, y=5, L=$$]{a9} \Vertex[x=3, y=5, L=$$]{b9} 
\Vertex[x=0, y=6, L=$$]{d8}
\tikzset{VertexStyle/.style = {shape = circle, color = black, fill = gray, minimum size = 0.1cm, draw}}
\Vertex[x=1.5, y=9.5, L=$$]{a2}
\Vertex[x=1.5, y=7.5, L=$$]{a6}
\Vertex[x=4.5, y=10, L=$$]{c1}
\Vertex[x=4.5, y=8.5, L=$$]{c4}
\Vertex[x=5.95, y=9, L=$$]{e1}
\Vertex[x=5.3, y=8.5, L=$$]{e2}
\Vertex[x=1.5, y=9, L=$$]{a3}
\tikzset{VertexStyle/.style = {shape = circle, color = white, fill = white, minimum size = 0.1cm, draw}}
\Vertex[x=6.8, y=8.5, L=$$]{f2}
\Vertex[x=6.8, y=7.5, L=$$]{f3}
\Vertex[x=5.3, y=7.5, L=$$]{e3}
\Vertex[x=6.8, y=8, L=$$]{f4}
\Vertex[x=5.3, y=8, L=$$]{e4}
\SetUpEdge[style={solid}, color=black]
\Edge(a1)(b1)
\Edge(a2)(b2)
\Edge(a3)(b3)
\Edge(a4)(b4)
\Edge(a5)(b5)
\Edge(a6)(b6)
\Edge(a7)(b7)
\Edge(a8)(b8)
\Edge(a9)(b9)
\SetUpEdge[style={dashed}, color=black]
\Edge(d1)(a1) \Edge(b1)(c1)
\Edge(d2)(a2) \Edge(b2)(c2)
\Edge(d3)(a3) \Edge(b3)(c3)
\Edge(d4)(a4) \Edge(b4)(c4)
\Edge(d5)(a5) \Edge(b5)(c5)
\Edge(d6)(a6) \Edge(b6)(c6)
\Edge(d7)(a7) \Edge(b7)(c7)
\Edge(e3)(f3)
\Edge(a8)(d8)
\SetUpEdge[style={thick}, color=black]
\Edge(c1)(b2)
\Edge(c4)(b6)
\Edge(e2)(f2)
\SetUpEdge[style={dotted}, color=black]
\Edge(a2)(b3)
\Edge(a6)(b8)
\Edge(a3)(d5)
\Edge(e4)(f4)
 \end{tikzpicture}}}
\caption[First Pass of the Randomized Two-pass Bipartite Matching Algorithm]{Illustration of the first pass of Algorithm~\ref{algorithm:two-pass-rand-bip}. By Theorem~\ref{theorem:subset-match}, nearly all vertices of $A'$ are matched in $M'$, in particular those that are not matched in $M_0$. \label{figure:two-pass-rand-bip-2} } \end{figure}
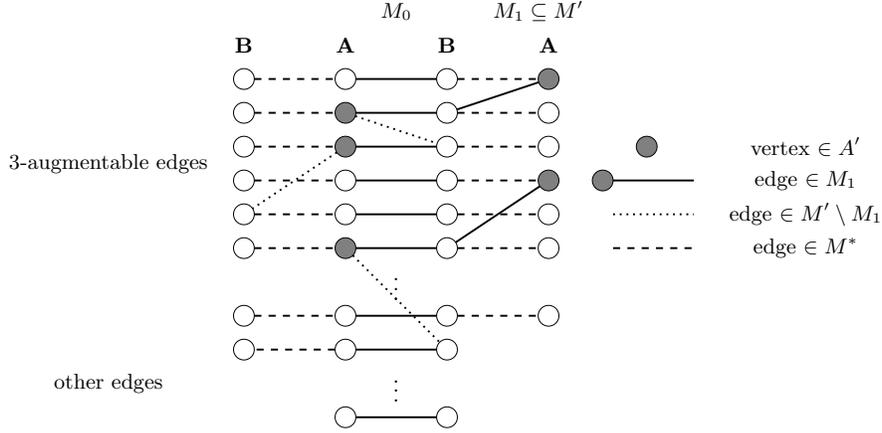

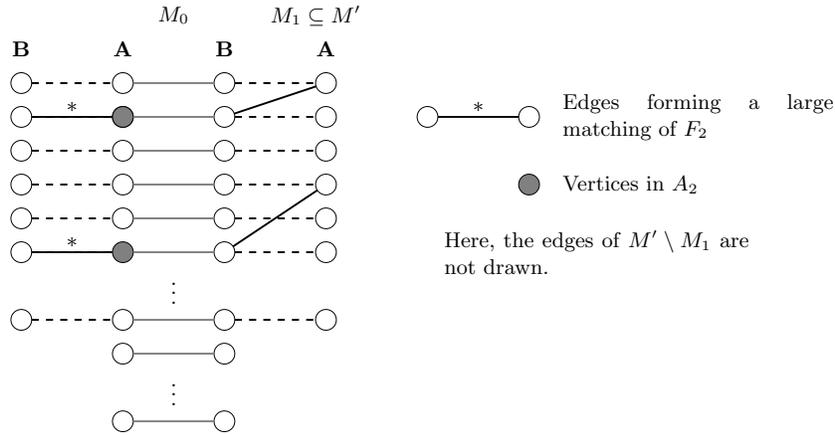
\begin{figure}[ht]
\centerline{\scalebox{0.9}{
\begin{tikzpicture}[node distance = 1.5 cm]
    \GraphInit[vstyle=normal]
%
\draw(2.25, 11) node{\textbf{$M_0$}};
\draw(4.35, 11) node{\textbf{$M_1 \subseteq M'$}};
\draw(0, 10.5) node{\textbf{B}};
\draw(1.5, 10.5) node{\textbf{A}};
\draw(3, 10.5) node{\textbf{B}};
\draw(4.5, 10.5) node{\textbf{A}};
\draw(2.25, 7) node{$\vdots$};
\draw(2.25, 5.5) node{$\vdots$};
\draw(6.75, 9.6) node{*};
\draw(0.75, 9.6) node{*};
\draw(0.75, 7.6) node{*};
\draw(8.5, 7.5) node{\parbox{4.5cm}{Here, the edges of $M' \setminus M_1$ are not drawn.}};
\draw(10, 9.5) node{\parbox{4cm}{Edges forming a large matching of $F_2$}};
\draw(10, 8.5) node{\parbox{4cm}{Vertices in $A_2$}};
\tikzset{VertexStyle/.style = {shape = circle, color = black, fill = white, minimum size = 0.1cm, draw}}
\Vertex[x=0, y=10, L=$$]{d1} \Vertex[x=1.5, y=10, L=$$]{a1} \Vertex[x=3, y=10, L=$$]{b1} 
\Vertex[x=0, y=9.5, L=$$]{d2} 				 \Vertex[x=3, y=9.5, L=$$]{b2} \Vertex[x=4.5, y=9.5, L=$$]{c2}
\Vertex[x=0, y=9, L=$$]{d3}  \Vertex[x=3, y=9, L=$$]{b3} \Vertex[x=4.5, y=9, L=$$]{c3}                
\Vertex[x=0, y=8.5, L=$$]{d4} \Vertex[x=1.5, y=8.5, L=$$]{a4} \Vertex[x=3, y=8.5, L=$$]{b4}                 
\Vertex[x=0, y=8, L=$$]{d5} \Vertex[x=1.5, y=8, L=$$]{a5} \Vertex[x=3, y=8, L=$$]{b5} \Vertex[x=4.5, y=8, L=$$]{c5}                
\Vertex[x=0, y=7.5, L=$$]{d6} 				 \Vertex[x=3, y=7.5, L=$$]{b6} \Vertex[x=4.5, y=7.5, L=$$]{c6}
\Vertex[x=0, y=6.5, L=$$]{d7} \Vertex[x=1.5, y=6.5, L=$$]{a7} \Vertex[x=3, y=6.5, L=$$]{b7} \Vertex[x=4.5, y=6.5, L=$$]{c7}
\Vertex[x=1.5, y=6, L=$$]{a8} \Vertex[x=3, y=6, L=$$]{b8} 
\Vertex[x=1.5, y=5, L=$$]{a9} \Vertex[x=3, y=5, L=$$]{b9} 
\Vertex[x=6, y=9.5, L=$$]{e2}
\tikzset{VertexStyle/.style = {shape = circle, color = black, fill = white, minimum size = 0.1cm, draw}}
\Vertex[x=1.5, y=9.5, L=$$]{a2}
\Vertex[x=4.5, y=10, L=$$]{c1}
\Vertex[x=4.5, y=8.5, L=$$]{c4}
\Vertex[x=7.5, y=9.5, L=$$]{f2}
\Vertex[x=1.5, y=9, L=$$]{a3}
%
\tikzset{VertexStyle/.style = {shape = circle, color = black, fill = gray, minimum size = 0.1cm, draw}}
\Vertex[x=1.5, y=9.5, L=$$]{a2}
\Vertex[x=1.5, y=7.5, L=$$]{a6}
\Vertex[x=7.5, y=8.5, L=$$]{f3}
\SetUpEdge[style={solid}, color=gray]
\Edge(a1)(b1)
\Edge(a2)(b2)
\Edge(a3)(b3)
\Edge(a4)(b4)
\Edge(a5)(b5)
\Edge(a6)(b6)
\Edge(a7)(b7)
\Edge(a8)(b8)
\Edge(a9)(b9)
\SetUpEdge[style={dashed}, color=black]
\Edge(d1)(a1) \Edge(b1)(c1)
\Edge(d2)(a2) \Edge(b2)(c2)
\Edge(d3)(a3) \Edge(b3)(c3)
\Edge(d4)(a4) \Edge(b4)(c4)
\Edge(d5)(a5) \Edge(b5)(c5)
\Edge(d6)(a6) \Edge(b6)(c6)
\Edge(d7)(a7) \Edge(b7)(c7)
\SetUpEdge[style={thick}, color=black]
\Edge(c1)(b2)
\Edge(c4)(b6)
\Edge(a2)(d2)
\Edge(a6)(d6)
\Edge(e2)(f2)
 \end{tikzpicture} }}
\caption[Second Pass of the Randomized Two-pass Bipartite Matching Algorithm]{Analysis of the second pass of Algorithm~\ref{algorithm:two-pass-rand-bip}. Here, we see that $M_0\oplus
M_1$ has two paths of length 2, and that both of those paths can be
extended into 3-augmenting paths using $M^*$: this illustrates $|\OPT
(F_2)|\geq 2$. Matching $M_2$, being a $1/2$ approximation, will find at
least one $3$-augmenting path. \label{figure:two-pass-rand-bip} } 
\end{figure}

\begin{theorem}
\label{l_theorem2} Algorithm~\ref{algorithm:two-pass-rand-bip} is a randomized two-pass semi-streaming 
algorithm  for \mbmshort{} with expected approximation ratio $\frac{1}{2} + 0.019$
in expectation over its internal random coin flips for any graph and any arrival order,
and can be implemented with $\Order(1)$ update time. 
\end{theorem}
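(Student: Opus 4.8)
The plan is to mirror the structure of the one-pass analysis but using Theorem~\ref{theorem:subset-match} in place of the random-order arguments to control the key matching. As in the one-pass case, the algorithm returns the larger of two matchings (here implicitly, since $M$ augments $M_0$), so I would carry out a worst-case split: either a plain $\Greedy$ matching already beats $\frac12$ by a wide margin, or $M_G$ is close to a $\frac12$-approximation. Concretely, I assume $|M_0| = |\Greedy(\pi)| \le (\frac12 + \epsilon)|M^*|$ and show that under this assumption the augmented matching $M$ is large. The final approximation ratio comes from balancing these two regimes via a numerical optimization in $\epsilon$ and $p$, exactly as in the proof of Theorem~\ref{theorem:one-pass-bip}.

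First I would bound $|M_1|$, the set of $M'$-edges that go between $B(M_0)$ and $\overline{A(M_0)}$. The crucial point is that the relevant target vertices form a perfect-matching-like structure: by Lemma~\ref{lemma:augmentable_edges}, since $|M_0| \le (\frac12+\epsilon)|M^*|$ and $M_0$ is maximal, at least $(\frac12 - 3\epsilon)|M^*|$ edges of $M_0$ are $3$-augmentable. Each such augmentable edge $ab \in M_0$ has a partner $b' \in \overline{B(M_0)}$ and an optimal edge into $\overline{A(M_0)}$; restricting attention to the sub-bipartite-graph between $\overline{A(M_0)}$ and $B(M_0)$ (which contains a matching of size $\ge (\frac12 - 3\epsilon)|M^*|$ coming from the right wings of $3$-augmenting paths), I would apply Theorem~\ref{theorem:subset-match} with parameter $p$. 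This yields $\Exp_{A'}|M_1| \ge \frac{p}{1+p}(\frac12 - 3\epsilon)|M^*|$, up to lower-order terms. The subtlety I would flag here is that $M_0$ and $A'$ are computed in parallel from the same stream, so $A'$ is independent of $M_0$ and the sample can legitimately be thought of as drawn after $M_0$ is fixed; I would invoke Theorem~\ref{theorem:subset-match} conditioned on $M_0$ and then take expectation.

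Next I would complete the $3$-augmenting paths. Every edge of $M_1$ sits at the $B(M_0)$ end of a length-$2$ path $M_0 \cup M_1$, and $A_2$ collects the $A$-endpoints $a$ of the corresponding $M_0$-edges whose augmentation requires a new edge $da$ with $d \in \overline{B(M_0)}$. As in Lemma~\ref{l_matching_size_left_wings}, the size of $\OPT$ in the graph $F_2$ between these $A_2$-vertices and $\overline{B(M_0)}$ is at least $|M_1|$ minus the number of non-$3$-augmentable $M_0$-edges, which is at most $4\epsilon|M^*|$ in expectation by Lemma~\ref{lemma:augmentable_edges}. The second pass runs $\Greedy$ on $F_2 \cap \pi$, and by maximality $|M_2| \ge \frac12 |\OPT(F_2)|$, giving $\Exp|M_2| \ge \frac12(\Exp|M_1| - 4\epsilon|M^*|)$. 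Since $|M| \ge |M_0| + |M_2|$ and $|M_0| \ge \frac12|M^*|$ (maximality always), I would chain the inequalities to get a lower bound of the form $\Exp|M| \ge |M^*|(\frac12 + \frac12(\frac{p}{1+p}(\frac12 - 3\epsilon) - 4\epsilon))$ up to $O(1)$ terms.

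The main obstacle I anticipate is not any single estimate but getting the triangle-counting and the independence bookkeeping exactly right, together with tuning the constants. Specifically, Theorem~\ref{theorem:subset-match} gives a bound against the full $\OPT(G)$, whereas I need its strength on the restricted right-wing subgraph; I must verify that the near-perfect-matching hypothesis there is supplied by the $3$-augmentable structure rather than by a genuine perfect matching, which requires stating the application of Theorem~\ref{theorem:subset-match} to the induced subgraph carefully. Finally, I would set the right-hand side of the combined bound equal to $(\frac12+\epsilon)|M^*|$, optimize over the worst-case $\epsilon$ and the sample rate $p$ (the choice $p = \sqrt2 - 1$ in the algorithm is exactly what maximizes the factor $\frac{p}{1+p}$ relative to the cost of sampling), and conclude the claimed $\frac12 + 0.019$ ratio by a numerical search. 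The memory and per-edge-time claims follow as in Algorithm~\ref{l_algo1}, since $F_1$ and $F_2$ admit compact $O(n\log n)$ representations.
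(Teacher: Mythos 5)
Your overall architecture (worst-case split on $\epsilon$, using $M_1$-edges as right wings, completing $3$-augmenting paths with $M_2 \geq \frac{1}{2}\OPT(F_2)$, then balancing) matches the paper's proof, but the central step --- your lower bound on $\Exp_{A'}|M_1|$ --- has a genuine gap. You propose to apply Theorem~\ref{theorem:subset-match} to the induced subgraph between $\overline{A(M_0)}$ and $B(M_0)$, concluding $\Exp_{A'}|M_1| \ge \frac{p}{1+p}(\frac{1}{2}-3\epsilon)|M^*|$. But Theorem~\ref{theorem:subset-match} is a statement about $\Greedy$ \emph{run on the stream of edges of the graph in question} incident to $A'$. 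Algorithm~\ref{algorithm:two-pass-rand-bip} cannot run $\Greedy$ on the subgraph $\overline{A(M_0)} \times B(M_0)$: the set $A(M_0)$ is only known at the end of the first pass, so $M'$ is necessarily computed as $\Greedy(F_1\cap\pi)$ over \emph{all} edges incident to $A'$, and $M_1$ is merely the restriction of $M'$ to $\overline{A(M_0)}\times B(M_0)$. The restriction of a greedy matching to a subgraph is not the greedy matching of that subgraph: sampled vertices in $A'\cap A(M_0)$ can consume vertices of $B(M_0)$ inside $M'$ and thereby block edges that the restricted run would have taken, and in the worst case this interference could make $M_1$ far smaller than your bound. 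Conditioning on $M_0$ and "drawing $A'$ afterwards," as you suggest, does not repair this, because the greedy process producing $M'$ still sees the blocking edges.

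The paper closes exactly this hole with an elementary counting step that your proposal is missing: by maximality of $M_0$, any edge $bc\in M'$ whose $A$-endpoint $c$ is \emph{not} in $A(M_0)$ must have $b\in B(M_0)$ (else $M_0+bc$ would be a matching), hence every edge of $M'\setminus M_1$ has its $A$-endpoint in $A(M_0)\cap A'$, giving $|M_1| \ge |M'| - |A(M_0)\cap A'|$. Taking expectations, with Theorem~\ref{theorem:subset-match} applied to the \emph{whole} graph $G$ and with independence of the (deterministic) $M_0$ from $A'$, yields $\Exp_{A'}|M_1| \ge \bigl(\frac{p}{1+p} - p(\frac{1}{2}+\epsilon)\bigr)|M^*|$, i.e.\ the cost of interference is the additive term $p(\frac{1}{2}+\epsilon)|M^*|$, not a multiplicative factor $(\frac{1}{2}-3\epsilon)$. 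The difference is not cosmetic: your expression $\frac{p}{1+p}(\frac{1}{2}-3\epsilon)-4\epsilon$ is increasing in $p$, so it would dictate $p=1$ and a breakeven $\epsilon = \frac{1}{22} \approx 0.045$, much better than the theorem claims --- and inconsistent with your own (correct) remark that $p=\sqrt{2}-1$ is the right choice. That value arises precisely from maximizing $\frac{p}{1+p}-\frac{p}{2}$, the tradeoff that only exists once the loss term $|A(M_0)\cap A'|$ is in the bound. (Also, minor: triangle-counting plays no role here; that issue concerns only the extension to non-bipartite graphs.)
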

\begin{proof}
By construction, each edge in $M_2$ is part of a $3$-augmenting path, hence the output has size:
$ |M| = |M_0| + |M_2|. $
Define $\epsilon$ to be such that $|M_0| = (\frac{1}{2} + \epsilon) |\OPT(G)|$. 
Since $M_2$ is a maximal matching of $F_2$, we have $|M_2| \ge \frac{1}{2} |\OPT(F_2)|$.
Let $M^*$ be a maximum matching of $G$.
Then $|\OPT(F_2)|$ is greater than or equal to the number of 
edges $ab$ of $M_0$ such that there exists an edge $bc$ of $M_1$ 
and an edge $da$ of $M^*$ that altogether form a 3-augmenting path of $M_0$:
\begin{eqnarray*}
|\OPT(F_2)| & \ge 
   & |\{ ab \in M_0 \, | \, \exists c: bc \in M_1  \mbox { and } \exists d: da \in M^* \}|  \\ 
& \ge & |\{ ab \in M_0 \, | \, \exists c: bc \in M_1 \}| - |\{ab \in M_0 \, | \,  ab \mbox{ not 3-augmentable} \}| .
\end{eqnarray*}
Lemma~\ref{lemma:augmentable_edges} gives   
$|\{ab \in M_0 \, | \,  ab \mbox{ is not 3-augmentable with } M^*\}| \le 4 \epsilon |\OPT(G)|$.
It remains to bound $|\{ ab \in M_0 \, | \, \exists c: bc \in M_1 \}|$ from below. By definition of $M'$ and of $M_1\subseteq M'$, and by maximality of $M_0$, 
\begin{eqnarray*}
|\{ ab \in M_0 \, | \, \exists c: bc \in M_1 \}| & = & |M'| - |\{ ab\in M' \, | \, a\in A(M_0) \}|  \\
& \ge & |M'| - |A(M_0) \cap A'| .
\end{eqnarray*}
Taking expectations, by Theorem~\ref{theorem:subset-match} and by independence of $M_0$ from $A'$:
\begin{eqnarray*}
\Exp_{A'}|M'| - \Exp_{A'} |A(M_0) \cap A'| 
\ge  \frac{p}{1+p}|\OPT(G)| - p (\frac{1}{2} + \epsilon) |\OPT(G)| .
\end{eqnarray*}
Combining:
\begin{eqnarray*}
\Exp_{A'} |M| \ge (\frac{1}{2} + \epsilon) |\OPT(G)| + \frac{1}{2} \left( |\OPT(G)| p (\frac{1}{1+p} - \frac{1}{2} - \epsilon)
- 4\epsilon |\OPT(G)|. \right) 
\end{eqnarray*}
For $\epsilon$ small, the right hand side is maximized for $p = \sqrt{2} - 1$. Then $\epsilon \approx 0.019$ minimizes $\max\{ |M|, |M_0| \}$ which proves the theorem. \qed
\end{proof}

\section{Deterministic Two-pass Algorithm on any Order} \label{section:matching-two-pass-det}
We discuss now deterministic two-pass streaming algorithms for \mbm{} and \mm{} for input streams in adversarial order.
We start our presentation with an algorithm for bipartite graphs in Section~\ref{section:matching-two-pass-det-bip}. Then, we
show how this idea can be extended to general graphs in Section~\ref{section:matching-two-pass-det-gen}.

\subsection{Bipartite Graphs} \label{section:matching-two-pass-det-bip}
\subsubsection{Algorithm}
The deterministic two-pass algorithm, Algorithm~\ref{algo:two-pass-det-bip}, 
follows the same line as its randomized version, Algorithm~\ref{algorithm:two-pass-rand-bip}. 
In a first pass, we compute a Greedy matching $M_0$ and some additional edges $S$ that we compute by 
Algorithm~\ref{algo:incomplete-b-semi-matching}.
If $M_0$ is close to a $\frac{1}{2}$-approximation then $S$ contains edges that serve
as parts of $3$-augmenting paths. These are completed via a Greedy matching in the second pass.

The way we compute the edge set $S$ is now different. In Algorithm~\ref{algorithm:two-pass-rand-bip}, $S$ was a matching $M'$ between $B$ and 
a random subset $A'$ of $ A$.
Now, $S$ is not a matching but a relaxation of matchings as follows.
Given an integer $\lambda\geq 2$, an {\em incomplete $\lambda$-bounded semi-matching} $S$ of a bipartite graph $G=(A, B, E)$ 
 is a subset $S \subseteq E$ such that
 $ \deg_S(a) \le 1$ and $ \deg_S(b) \le \lambda$, for all $a\in A$ and $b\in B$.
This notion is closely related to semi-matchings. A semi-matching matches all $A$ vertices to $B$ 
vertices without limitations on the degree of a $B$ vertex. However, since we require that 
the $B$ vertices have constant degree, we loosen the condition that all $A$ vertices need to be matched.

In Lemma~\ref{lemma:incomplete-semi-matching}, we show that 
Algorithm~\ref{algo:incomplete-b-semi-matching}, a straightforward greedy algorithm, computes 
an incomplete  $\lambda$-bounded semi-matching that covers at least $\frac{\lambda}{\lambda+1}|M^*|$ vertices of $A$.
\begin{algorithm}
 \caption{Incomplete $\lambda$-bounded Semi-matching iSEMI($\lambda$) \label{algo:incomplete-b-semi-matching}}
  \begin{algorithmic}  
   \STATE $S \gets \varnothing$
   \STATE \textbf{while} $\exists $ edge $ab$ in stream
      \STATE \hspace{0.2cm} \textbf{if} $\deg_S(a) = 0$ and $\deg_S(b) \le \lambda-1$ \textbf{then} $S \gets S \cup \{ ab \}$
   \RETURN $S$
  \end{algorithmic}
\end{algorithm}
Now, assume that the greedy matching algorithm computes a $M_0$ close to a $\frac{1}{2}$-approximation.
Then, for $\lambda \ge 2$ there are many $A$ vertices that are not matched in $M_0$ but are matched in $S$. 
Edges incident to those in $S$  are candidates for the construction of $3$-augmenting paths. 
This argument can be made rigorous, leading to Algorithm~\ref{algo:two-pass-det-bip} where $\lambda$ is set to $3$,
see Theorem~\ref{theorem:two-pass-det-bip}.

\begin{algorithm} 
\caption{Two-pass Deterministic Bipartite Matching Algorithm \label{algo:two-pass-det-bip}}
 \begin{algorithmic}
 \STATE \textbf{First pass:} $M_0 \gets $ Greedy$(\pi)$ and $S \gets $ iSEMI$(3)$
 \STATE $S_1 \gets \{ e \in S \, | \, e = ab \mbox{ such that } a \in \overline{A(M_0)} \mbox{ and } b \in B(M_0) \}$
 \STATE $A_2 \gets \{ a \in A(M_0) \, | \, \exists bc: ab \in M_0 \text{ and } bc \in S_1 \} $
 \STATE $F \gets \{e \, | \, e=ab \mbox{ such that } a \in A_2 \mbox{ and } b \in \overline{B(M_0)} \}$
 \STATE \textbf{Second pass:} $M_2 \gets $ Greedy($\pi \cap F$)
 \STATE  Augment $M_0$ by edges in $S_1$ and $M_2$ and store it in $M$

 \RETURN $M$
\end{algorithmic}
\end{algorithm}
We show two figures illustrating the first pass (Figure~\ref{figure:two-pass-det-bip-1})
and the second pass (Figure~\ref{figure:two-pass-det-bip-2}) of Algorithm~\ref{algo:two-pass-det-bip}.
\begin{figure}[ht!]
\centerline{\scalebox{0.9}{
\begin{tikzpicture}[node distance = 1.5 cm]
    \GraphInit[vstyle=normal]
\draw(-2.0, 8.75) node{$3$-augmentable edges};
\draw(-2.0, 5.5) node{other edges};
\draw(2.25, 11) node{\textbf{$M_0$}};
\draw(4.35, 11) node{\textbf{$S_1 \subseteq S$}};
\draw(0, 10.5) node{\textbf{B}};
\draw(1.5, 10.5) node{\textbf{A}};
\draw(3, 10.5) node{\textbf{B}};
\draw(4.5, 10.5) node{\textbf{A}};
\draw(2.25, 7) node{$\vdots$};
\draw(2.25, 5.5) node{$\vdots$};
%
\draw(8.3, 8.5) node{edge $\in S_1$};
\draw(8.3, 8) node{edge $\in S \setminus S_1$};
\draw(8.3, 7.5) node{edge $\in M^*$};
\tikzset{VertexStyle/.style = {shape = circle, color = black, fill = white, minimum size = 0.1cm, draw}}
\Vertex[x=0, y=10, L=$$]{d1} \Vertex[x=1.5, y=10, L=$$]{a1} \Vertex[x=3, y=10, L=$$]{b1} 
\Vertex[x=0, y=9.5, L=$$]{d2} 				 \Vertex[x=3, y=9.5, L=$$]{b2} \Vertex[x=4.5, y=9.5, L=$$]{c2}
\Vertex[x=0, y=9, L=$$]{d3}  \Vertex[x=3, y=9, L=$$]{b3} \Vertex[x=4.5, y=9, L=$$]{c3}                
\Vertex[x=0, y=8.5, L=$$]{d4} \Vertex[x=1.5, y=8.5, L=$$]{a4} \Vertex[x=3, y=8.5, L=$$]{b4}                 
\Vertex[x=0, y=8, L=$$]{d5} \Vertex[x=1.5, y=8, L=$$]{a5} \Vertex[x=3, y=8, L=$$]{b5} \Vertex[x=4.5, y=8, L=$$]{c5}                
\Vertex[x=0, y=7.5, L=$$]{d6} 				 \Vertex[x=3, y=7.5, L=$$]{b6} \Vertex[x=4.5, y=7.5, L=$$]{c6}
\Vertex[x=0, y=6.5, L=$$]{d7} \Vertex[x=1.5, y=6.5, L=$$]{a7} \Vertex[x=3, y=6.5, L=$$]{b7} \Vertex[x=4.5, y=6.5, L=$$]{c7}
\Vertex[x=1.5, y=6, L=$$]{a8} \Vertex[x=3, y=6, L=$$]{b8} 
\Vertex[x=1.5, y=5, L=$$]{a9} \Vertex[x=3, y=5, L=$$]{b9} 
\Vertex[x=0, y=6, L=$$]{d8}
\tikzset{VertexStyle/.style = {shape = circle, color = black, fill = white, minimum size = 0.1cm, draw}}
\Vertex[x=1.5, y=9.5, L=$$]{a2}
\Vertex[x=1.5, y=7.5, L=$$]{a6}
\Vertex[x=4.5, y=10, L=$$]{c1}
\Vertex[x=4.5, y=8.5, L=$$]{c4}
%
\Vertex[x=1.5, y=9, L=$$]{a3}
\tikzset{VertexStyle/.style = {shape = circle, color = white, fill = white, minimum size = 0.1cm, draw}}
\Vertex[x=5.6, y=8.5, L=$$]{e2}
\Vertex[x=7.1, y=8.5, L=$$]{f2}
\Vertex[x=7.1, y=7.5, L=$$]{f3}
\Vertex[x=5.6, y=7.5, L=$$]{e3}
\Vertex[x=7.1, y=8, L=$$]{f4}
\Vertex[x=5.6, y=8, L=$$]{e4}
\SetUpEdge[style={solid}, color=black]
\Edge(a1)(b1)
\Edge(a2)(b2)
\Edge(a3)(b3)
\Edge(a4)(b4)
\Edge(a5)(b5)
\Edge(a6)(b6)
\Edge(a7)(b7)
\Edge(a8)(b8)
\Edge(a9)(b9)
\SetUpEdge[style={dashed}, color=black]
\Edge(d1)(a1) \Edge(b1)(c1)
\Edge(d2)(a2) \Edge(b2)(c2)
\Edge(d3)(a3) \Edge(b3)(c3)
\Edge(d4)(a4) \Edge(b4)(c4)
\Edge(d5)(a5) \Edge(b5)(c5)
\Edge(d6)(a6) \Edge(b6)(c6)
\Edge(d7)(a7) \Edge(b7)(c7)
\Edge(e3)(f3)
\Edge(a8)(d8)
\SetUpEdge[style={thick}, color=black]
\Edge(c2)(b1)
\Edge(c3)(b1)
\Edge(c1)(b3)
\Edge(c4)(b5)
\Edge(c6)(b5)
\Edge(e2)(f2)
\SetUpEdge[style={dotted}, color=black]
\Edge(a5)(b7)
\Edge(a8)(b7)
\Edge(a1)(d5)
\Edge(a2)(b3)
\Edge(a6)(b8)
\Edge(a3)(d5)
\Edge(e4)(f4)
 \end{tikzpicture}}}
\caption[First Pass of the Deterministic Two-pass Bipartite Matching Algorithm]{Illustration of the first pass of Algorithm~\ref{algo:two-pass-det-bip}. In this example we set $\lambda = 2$ and we compute an incomplete degree $2$ limited semi-matching $S$. By Lemma~\ref{lemma:incomplete-semi-matching}, we match at least $\frac{2}{3} |M^*|$ $A$ vertices. Since $|M| \approx \frac{1}{2} |M^*|$, some $A$ vertices that are not matched in $M_0$ are matched in $S$. The edges incident to those define $S_1$. \label{figure:two-pass-det-bip-1} } \end{figure}
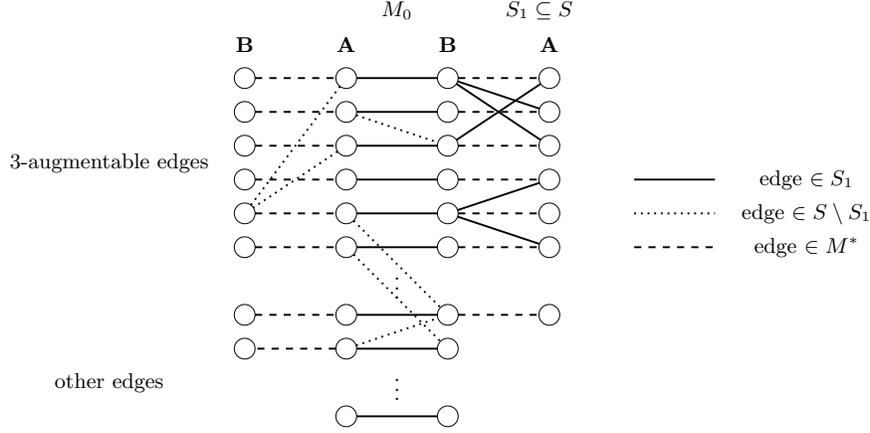

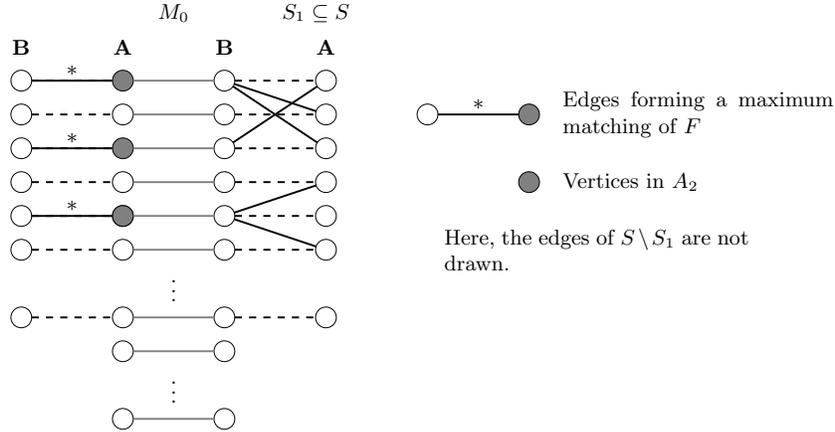
\begin{figure}[ht!]
\centerline{\scalebox{0.9}{
\begin{tikzpicture}[node distance = 1.5 cm]
    \GraphInit[vstyle=normal]
\draw(2.25, 11) node{\textbf{$M_0$}};
\draw(4.35, 11) node{\textbf{$S_1 \subseteq S$}};
\draw(0, 10.5) node{\textbf{B}};
\draw(1.5, 10.5) node{\textbf{A}};
\draw(3, 10.5) node{\textbf{B}};
\draw(4.5, 10.5) node{\textbf{A}};
\draw(2.25, 7) node{$\vdots$};
\draw(2.25, 5.5) node{$\vdots$};
\draw(6.75, 9.6) node{*};
\draw(0.75, 10.1) node{*};
\draw(0.75, 9.1) node{*};
\draw(0.75, 8.1) node{*};
\draw(8.5, 7.5) node{\parbox{4.5cm}{Here, the edges of $S \setminus S_1$ are not drawn.}};
\draw(10, 9.5) node{\parbox{4cm}{Edges forming a maximum matching of $F$}};
\draw(10, 8.5) node{\parbox{4cm}{Vertices in $A_2$}};
\tikzset{VertexStyle/.style = {shape = circle, color = black, fill = white, minimum size = 0.1cm, draw}}
\Vertex[x=0, y=10, L=$$]{d1} \Vertex[x=3, y=10, L=$$]{b1} 
\Vertex[x=0, y=9.5, L=$$]{d2} 				 \Vertex[x=3, y=9.5, L=$$]{b2} \Vertex[x=4.5, y=9.5, L=$$]{c2}
\Vertex[x=0, y=9, L=$$]{d3}  \Vertex[x=3, y=9, L=$$]{b3} \Vertex[x=4.5, y=9, L=$$]{c3}                
\Vertex[x=0, y=8.5, L=$$]{d4} \Vertex[x=1.5, y=8.5, L=$$]{a4} \Vertex[x=3, y=8.5, L=$$]{b4}                 
\Vertex[x=0, y=8, L=$$]{d5} \Vertex[x=1.5, y=8, L=$$]{a5} \Vertex[x=3, y=8, L=$$]{b5} \Vertex[x=4.5, y=8, L=$$]{c5}                
\Vertex[x=0, y=7.5, L=$$]{d6} 				 \Vertex[x=3, y=7.5, L=$$]{b6} \Vertex[x=4.5, y=7.5, L=$$]{c6}
\Vertex[x=0, y=6.5, L=$$]{d7} \Vertex[x=1.5, y=6.5, L=$$]{a7} \Vertex[x=3, y=6.5, L=$$]{b7} \Vertex[x=4.5, y=6.5, L=$$]{c7}
\Vertex[x=1.5, y=6, L=$$]{a8} \Vertex[x=3, y=6, L=$$]{b8} 
\Vertex[x=1.5, y=5, L=$$]{a9} \Vertex[x=3, y=5, L=$$]{b9} 
\Vertex[x=6, y=9.5, L=$$]{e2}
\tikzset{VertexStyle/.style = {shape = circle, color = black, fill = white, minimum size = 0.1cm, draw}}
\Vertex[x=1.5, y=9.5, L=$$]{a2}
\Vertex[x=1.5, y=7.5, L=$$]{a6}
\Vertex[x=4.5, y=10, L=$$]{c1}
\Vertex[x=4.5, y=8.5, L=$$]{c4}
\tikzset{VertexStyle/.style = {shape = circle, color = black, fill = gray, minimum size = 0.1cm, draw}}
\Vertex[x=1.5, y=10, L=$$]{a1} 
\Vertex[x=1.5, y=9, L=$$]{a3}
\Vertex[x=1.5, y=8, L=$$]{a5}
\Vertex[x=7.5, y=9.5, L=$$]{f2}
\Vertex[x=7.5, y=8.5, L=$$]{f3}
\tikzset{VertexStyle/.style = {shape = circle, color = white, fill = white, minimum size = 0.1cm, draw}}
\SetUpEdge[style={solid}, color=gray]
\Edge(a1)(b1)
\Edge(a2)(b2)
\Edge(a3)(b3)
\Edge(a4)(b4)
\Edge(a5)(b5)
\Edge(a6)(b6)
\Edge(a7)(b7)
\Edge(a8)(b8)
\Edge(a9)(b9)
\SetUpEdge[style={dashed}, color=black]
\Edge(d1)(a1) \Edge(b1)(c1)
\Edge(d2)(a2) \Edge(b2)(c2)
\Edge(d3)(a3) \Edge(b3)(c3)
\Edge(d4)(a4) \Edge(b4)(c4)
\Edge(d5)(a5) \Edge(b5)(c5)
\Edge(d6)(a6) \Edge(b6)(c6)
\Edge(d7)(a7) \Edge(b7)(c7)
\SetUpEdge[style={thick}, color=black]
\Edge(c1)(b3)
\Edge(c2)(b1)
\Edge(c3)(b1)
\Edge(c4)(b5)
\Edge(c6)(b5)
\Edge(a1)(d1)
\Edge(a3)(d3)
\Edge(a5)(d5)
\Edge(e2)(f2)
 \end{tikzpicture} }}
\caption[Second Pass of the Deterministic Two-pass Bipartite Matching Algorithm]{Analysis of the second pass of Algorithm~\ref{algo:two-pass-det-bip}. In this example, we set $\lambda = 2$. 
Here, we see that $M_0 \oplus
S_1$ has five paths of length 2. These paths are not disjoint, but since the maximal degree in $S$ is $2$, 
$M_0 \oplus S_1$ has at least $\frac{1}{2} \cdot 5$ disjoint paths, and hence $|A_2| = 3 \ge \frac{1}{2} \cdot 5$.
A maximum matching in $F$ is of size $3$, and in the second pass, Greedy will find at least half of them leading
to at least two $3$-augmenting paths.
 \label{figure:two-pass-det-bip-2} } 
\end{figure}

\subsubsection{Analysis}
We firstly present a lemma, Lemma~\ref{lemma:incomplete-semi-matching}, that analyses Algorithm~\ref{algo:incomplete-b-semi-matching}. This lemma
is then used in the proof of the main theorem, Theorem~\ref{theorem:two-pass-det-bip}.
\begin{lemma} \label{lemma:incomplete-semi-matching}
Let $S = \mathrm{iSEMI}(\lambda)$ be the output of Algorithm~\ref{algo:incomplete-b-semi-matching} for some $\lambda\geq 2$.
Then $S$ is an incomplete  $\lambda$-bounded semi-matching such that 
$|A(S)| \ge \frac{\lambda}{\lambda+1} |M^*|$.
\end{lemma}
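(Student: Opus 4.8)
The plan is to compare the set $A(S)$ of $A$-vertices covered by $S$ against the $A$-vertices matched by a fixed maximum matching $M^* = \OPT(G)$, and to charge every ``missed'' vertex of $M^*$ to a saturated $B$-vertex. First I would record the easy structural half: the acceptance test in Algorithm~\ref{algo:incomplete-b-semi-matching} inserts $ab$ only when $\deg_S(a)=0$ and $\deg_S(b)\le \lambda-1$, so after any insertion every $a\in A$ has $\deg_S(a)\le 1$ and every $b\in B$ has $\deg_S(b)\le\lambda$; hence $S$ is an incomplete $\lambda$-bounded semi-matching. Since each covered $A$-vertex carries exactly one edge, I also note the identity $|A(S)| = |S| = \sum_{b\in B}\deg_S(b)$, which will drive the counting.

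The key observation concerns the set $U := A(M^*)\setminus A(S)$ of $A$-vertices matched by $M^*$ but left uncovered by $S$. Fix $a\in U$ and let $b = M^*(a)$ be its partner. Because $a$ is never covered, $\deg_S(a)=0$ holds throughout the run, in particular at the moment the edge $ab$ is presented in the stream. Since $ab$ was nonetheless rejected, the only possibility permitted by the acceptance test is that $\deg_S(b) > \lambda-1$ at that time, i.e.\ $\deg_S(b)=\lambda$; as degrees are non-decreasing, $b$ stays saturated until the end. Thus every $a\in U$ points, via $M^*$, to a saturated $B$-vertex.

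To finish, I would turn this into a count. Since $M^*$ is a matching, the partners $\{M^*(a) : a\in U\}$ are pairwise distinct, and by the previous step each of them has degree exactly $\lambda$ in $S$. Summing just their contributions to $S$ gives $|S|\ge \lambda\,|U|$, hence $|A(S)| = |S| \ge \lambda\,|U|$, i.e.\ $|U|\le \tfrac{1}{\lambda}|A(S)|$. Splitting $A(M^*)$ into its covered and uncovered parts then yields $|M^*| = |A(M^*)| = |A(M^*)\cap A(S)| + |U| \le |A(S)| + \tfrac{1}{\lambda}|A(S)| = \tfrac{\lambda+1}{\lambda}|A(S)|$, which rearranges to the claimed bound $|A(S)|\ge \tfrac{\lambda}{\lambda+1}|M^*|$.

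The argument is a charging scheme with no heavy calculation; the one place to be careful, and what I view as the crux, is the saturation step. There I must use that an uncovered $A$-vertex stays free for the \emph{entire} stream, so that the rejection of its $M^*$-edge can only be blamed on the $B$-endpoint being full, and combine this with the injectivity of $M^*$ so that distinct missed vertices are charged to distinct, genuinely saturated $B$-vertices rather than double-counting a single one.
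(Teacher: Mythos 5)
Your proof is correct and follows essentially the same argument as the paper: charge each vertex of $A(M^*)\setminus A(S)$ via its $M^*$-mate to a saturated $B$-vertex, use injectivity of $M^*$ to get $|A(M^*)\setminus A(S)|\le \tfrac{1}{\lambda}|A(S)|$, and combine with $|M^*|-|A(S)|\le |A(M^*)\setminus A(S)|$. Your write-up merely makes explicit some details (monotonicity of degrees, $|A(S)|=|S|$) that the paper leaves implicit.
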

\begin{proof}
By construction, $S$ is an incomplete degree $\lambda$ bounded semi-matching. We bound 
$A(M^*)~\setminus~A(S)$ from below. Let $a \in A(M^*) \setminus A(S)$ and let $b$ be 
its mate in $M^*$. The algorithm did not add the optimal edge $ab$ upon its arrival. 
This implies that $b$ was already matched to $\lambda$ other vertices. Hence,
$ |A(M^*) \setminus A(S)| \le \frac{1}{\lambda} |A(S)| $.
Then the result follows by combining this inequality with
$ |M^*| - |A(S)| \le |A(M^*) \setminus A(S)| $. \qed
\end{proof}

\begin{theorem}
 \label{theorem:two-pass-det-bip}
 Algorithm~\ref{algo:two-pass-det-bip} is a deterministic two-pass semi-streaming algorithm for 
\mbmshort{} with approximation ratio $\frac{1}{2} + 0.019$ for any graph and any arrival order
and can be implemented with $\Order(1)$ update time.
\end{theorem}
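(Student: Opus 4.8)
The plan is to follow the same template as the proof of Theorem~\ref{l_theorem2}, replacing the random-sampling ingredient (Theorem~\ref{theorem:subset-match}) by the deterministic semi-matching guarantee (Lemma~\ref{lemma:incomplete-semi-matching}). As in the randomized case, every edge of $M_2$ completes a $3$-augmenting path of the form $d\,a\,b\,c$ with $da\in M_2$, $ab\in M_0$ and $bc\in M_1$, so the output satisfies $|M| = |M_0| + |M_2|$. I would introduce a parameter $\epsilon$ defined by $|M_0| = (\frac12+\epsilon)|M^*|$, where $M^*=\OPT(G)$, and aim to establish a lower bound on $|M|$ that decreases in $\epsilon$; combined with the trivial bound $|M|\ge|M_0|=(\frac12+\epsilon)|M^*|$ that increases in $\epsilon$, the worst case over the adversary's choice of graph occurs at the crossing point of the two bounds.

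First I would bound $|M_2|$ from below. Since $M_2$ is a maximal matching of $F$, we have $|M_2|\ge\frac12|\OPT(F)|$, so it suffices to lower bound $|\OPT(F)|$. To this end I would exhibit an explicit matching inside $F$: for every $a\in A_2$ whose $M_0$-edge $ab$ is $3$-augmentable, the optimal edge $da\in M^*$ incident to $a$ has $d\in\overline{B(M_0)}$ and therefore lies in $F$; these $M^*$-edges are vertex-disjoint, so $|\OPT(F)|\ge |A_2| - |\{ab\in M_0 : ab \text{ is not } 3\text{-augmentable}\}|$. Applying Lemma~\ref{lemma:augmentable_edges} to the maximal matching $M_0$ bounds the subtracted term by $4\epsilon|M^*|$, giving $|\OPT(F)|\ge |A_2| - 4\epsilon|M^*|$.

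The heart of the argument, and the place where the deterministic algorithm genuinely differs from its randomized counterpart, is lower bounding $|A_2|$. Since $M_0$ is a matching, $|A_2|$ equals the number of distinct vertices $b\in B(M_0)$ incident to $M_1$, i.e.\ $|A_2|=|B(M_1)|$. I would first argue, using maximality of $M_0$, that any edge $bc\in S$ whose $A$-endpoint $c$ is unmatched by $M_0$ automatically has its $B$-endpoint in $B(M_0)$ (otherwise $bc$ could extend $M_0$), hence belongs to $M_1$; since $\deg_S(c)\le 1$, each such free $A$-vertex contributes exactly one edge to $M_1$, so $|M_1|\ge |A(S)| - |A(M_0)|$. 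Lemma~\ref{lemma:incomplete-semi-matching} with $\lambda=3$ gives $|A(S)|\ge\frac34|M^*|$, and $|A(M_0)|=|M_0|=(\frac12+\epsilon)|M^*|$, whence $|M_1|\ge(\frac14-\epsilon)|M^*|$. Finally, because $\deg_S(b)\le 3$ for every $b$, at most three edges of $M_1$ are incident to any single $B$-vertex, so $|A_2|=|B(M_1)|\ge |M_1|/3\ge\frac13(\frac14-\epsilon)|M^*|$.

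Chaining these bounds yields $|M|\ge(\frac12+\frac1{24}-\frac76\epsilon)|M^*|$; setting this equal to $(\frac12+\epsilon)|M^*|$ gives $\epsilon=\frac1{52}\approx0.019$, which is the claimed ratio. The main obstacle is exactly the factor-$\lambda$ loss incurred when passing from the edge count $|M_1|$ to the vertex count $|B(M_1)|$: increasing $\lambda$ improves the semi-matching coverage $\frac{\lambda}{\lambda+1}$ but simultaneously degrades the $1/\lambda$ loss, and I would verify by a short optimization over the resulting $\epsilon$-dependent expression that $\lambda=3$ is the integer choice balancing the two effects and maximizing the guarantee. The semi-streaming bookkeeping ($\Order(n\log n)$ memory and $\Order(1)$ time per edge) is routine, since each of $M_0$, $S$, $M_1$, $A_2$ and the membership test for $F$ can be maintained from $\Order(n)$-size bit arrays.
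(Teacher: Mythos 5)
Your proposal is correct and follows essentially the same route as the paper's own proof: the identical chain $|M| \ge |M_0| + \frac{1}{2}|\OPT(F)|$, then $|\OPT(F)| \ge |A_2| - 4\epsilon|M^*|$ via Lemma~\ref{lemma:augmentable_edges}, then $|A_2| \ge \frac{1}{\lambda}|M_1|$ from the degree bound on $S$, then $|M_1| \ge |A(S)| - |A(M_0)| \ge (\frac{\lambda}{\lambda+1}-\frac{1}{2}-\epsilon)|M^*|$ via Lemma~\ref{lemma:incomplete-semi-matching}, ending with the same crossing-point computation that yields $\epsilon = \frac{1}{52}$ at $\lambda=3$. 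The only differences are expository: you spell out the maximality argument showing every $S$-edge with a free $A$-endpoint lands in $M_1$ (which the paper states tersely via its partition of $S$), and you fix $\lambda=3$ upfront rather than optimizing the general-$\lambda$ formula at the end.
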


\begin{proof}
 The computed matching $M$ is of size $|M_0| + |M_2|$ since, by construction, for each
edge in $M_2$ there is at least one distinct edge in $S_1$ that allows the construction of
a $3$-augmenting path. Each $3$-augmenting path increases the matching $M_0$ by $1$. 
See also Figure~\ref{figure:two-pass-det-bip-2}.
 Since $|M_2|$ is a maximal matching of the graph induced by the edges $F$, we obtain
 \begin{eqnarray*}
 |M| \ge |M_0| + \frac{1}{2} |\OPT(F)| .
\end{eqnarray*}
Let $\epsilon$ be such that $|M_0| = (\frac{1}{2} + \epsilon) |M^*|$. By Lemma~\ref{lemma:augmentable_edges}, at most $4\epsilon|M^*|$ edges
of $M_0$ are not $3$-augmentable, hence
\begin{equation*}
 \OPT(F) \ge |A_2| - 4\epsilon |M^*|. 
\end{equation*}
$A_2$ are those vertices matched also by $M_0$ such that there exists an edge in $S_1$
matching the mate of the $A_2$ vertex. Since the maximal degree in $S_1$ is $\lambda$, 
we can bound $|A_2|$ by
\begin{eqnarray*}
 |A_2| \ge \frac{1}{\lambda}  |S_1|.
\end{eqnarray*}
Note that $|S_1| = |A(S) \setminus A(M_0)|$ since the degree of an $A$ vertex matched by $S$ in $S$ is one, and $S$ can be partitioned into $S_{M_0}, S_{\overline{M_0}}$ such that edges in $S_{M_0}$ couple an $A$ vertex also matched in $M_0$, and edges in $S_{\overline{M_0}}$ couple an $A$ vertex that is not matched in $M_0$. Now, $|S_1| = |S_{\overline{M_0}}|$ since an edge of $S$ is taken into $S_1$ if it is in $S_{\overline{M_0}}$. 
Lemma~\ref{lemma:incomplete-semi-matching}
allows us to bound the size of the set $A(S) \setminus A(M_0)$ via
\begin{eqnarray*}
 |A(S) \setminus A(M_0)| \ge |A(S)| - |A(M_0)| \ge (\frac{\lambda}{\lambda+1} - \frac{1}{2} - \epsilon)|M^*|.
\end{eqnarray*}
Using the prior Inequalities, we obtain
\begin{eqnarray*}
\label{equation:10}
 |M| \ge (\frac{1}{2} - \epsilon + \frac{1}{2\lambda+2} - \frac{1}{4\lambda} - \frac{\epsilon}{2\lambda}) |M^*| .
\end{eqnarray*}

Since we have also $|M| \ge |M_0| = (\frac{1}{2} + \epsilon)|M^*|$, we set 
\begin{eqnarray*}
 \epsilon_0 & = & \argmin_{\epsilon} \max \{(\frac{1}{2} - \epsilon + \frac{1}{2\lambda+2} - \frac{1}{4\lambda} - \frac{\epsilon}{2\lambda}) |M^*|, (\frac{1}{2} + \epsilon)|M^*| \} \\
 & = & \frac{\lambda-1}{8\lambda^2+10\lambda+2},
\end{eqnarray*}
which is maximized for $\lambda = 3$ leading to an approximation factor of $\frac{1}{2} + \frac{1}{52} \approx \frac{1}{2} + 0.019$. 

Concerning the update time, note that once an edge is added in the second pass,
a corresponding $3$-augmenting path can be determined in time $\Order(1)$.  \qed
\end{proof}

\subsection{Extension to General Graphs} \label{section:matching-two-pass-det-gen}
\subsubsection{Algorithm}
The deterministic two-pass algorithm for general graphs follows the same line as the
deterministic two-pass algorithm for bipartite graphs. In the first pass, Greedy matching 
$M$ together with some additional edges $F$ are computed. $F$ forms an \textit{incomplete $b$-bounded forest}.
\begin{definition}[incomplete $b$-bounded forest]
 Given an integer $b$, an incomplete  $b$-bounded forest $F$ is a 
 cycle free subset of the edges of a graph $G = (V, E)$ 
with maximal degree $b$.
\end{definition}
If $F \oplus M$ contains $3$-augmenting paths, we augment
$M$ by a maximal set of disjoint $3$-augmenting paths and store the result in $M'$. 
Those edges of $F$ that were not used in the previous augmentation and that form
length-$2$ paths with edges of $M'$ are stored in $M_R$.
In a second pass, length-$2$ paths of $M' \cup M_R$ are completed to
$3$-augmenting paths by computing a matching $M_L$. A maximal set of disjoint
$3$-augmenting paths of $M' \cup M_L \cup M_R$ is then used to augment $M'$.
\begin{algorithm}
 \caption{$b$-bounded Forest: FOREST$(b)$ \label{algo:b-bounded-forest}}
 \begin{algorithmic}[1]
  \REQUIRE $b$
  \STATE $S \gets \varnothing$
  \WHILE {stream not empty}
  \STATE $uv \gets $ next edge in stream
  \IF{$(\deg_S(u) = 0$ and $\deg_S(v) \le b-1)$ \textbf{ or } $(\deg_S(u) \le b-1$ and $\deg_S(v) = 0 )$} 
   \STATE $S \gets S \cup \{ uv \}$
  \ENDIF
  \ENDWHILE
  \RETURN $S$
 \end{algorithmic}
\end{algorithm}
\begin{algorithm}
\caption{Two-pass Deterministic Matching Algorithm for General Graphs \label{algo:two-pass-det-gen}}
 \begin{algorithmic}[1]
 \REQUIRE $b$
 \STATE $Aug \gets \varnothing$
 \STATE $M \gets $ Greedy$()$ and $F \gets $ FOREST$(b)$ \COMMENT{\textbf{first pass}}
 \STATE \label{line:max-3-aug-1} $M' \gets $ $M$ augmented by a maximal set of $3$-augmenting paths in $M \oplus F$
 \STATE $M_R \gets $ maximal subset of $F$ such that $\forall uv \in M_R: u \in V(M')$ and $\deg_{M_R}(v) = 1$
 \STATE $V' \gets \{ v \in V(M') \, : \, v' = M'(v) \mbox{ and } \exists v'u \in M_R: u \notin V(M') \} $
 \WHILE[\textbf{second pass}]{stream not empty} \label{line:second-pass}
  \STATE $vw \gets $ next edge in stream
  \IF{$v \in V'$ and $w \notin V(M')$ and $vw$ completes a $3$-augmenting path with edges $uv \in M', tu \in M_R$}
    \STATE \label{line:add-augmenting-path} $Aug \gets Aug \cup \{vw, tu \}, $ remove all edges from $M_R$ incident to $u$
  \ENDIF
 \ENDWHILE
 \STATE $M'' \gets M'$ augmented with $Aug$
 \RETURN $M''$
\end{algorithmic}
\end{algorithm}
Algorithm~\ref{algo:b-bounded-forest} is a greedy algorithm that constructs a forest $F$ such that the maximal
degree of a node in $F$ is $b$, for some $b \ge 1$. For a large enough $b$, all but a small fraction of the
vertices of the graph are covered by an edge in $F$.

The situation of the algorithm after the first pass is illustrated in Figure~\ref{figure:two-pass-det-gen}. 
Note that $M_R$ is an incomplete $b$-bounded semi-matching in the induced bipartite graph with vertex sets 
$V \setminus V(M')$ and $V(M')$. 
\begin{figure}[ht]
\begin{center}
 \includegraphics[bb = 110 330 470 510, scale=0.8]{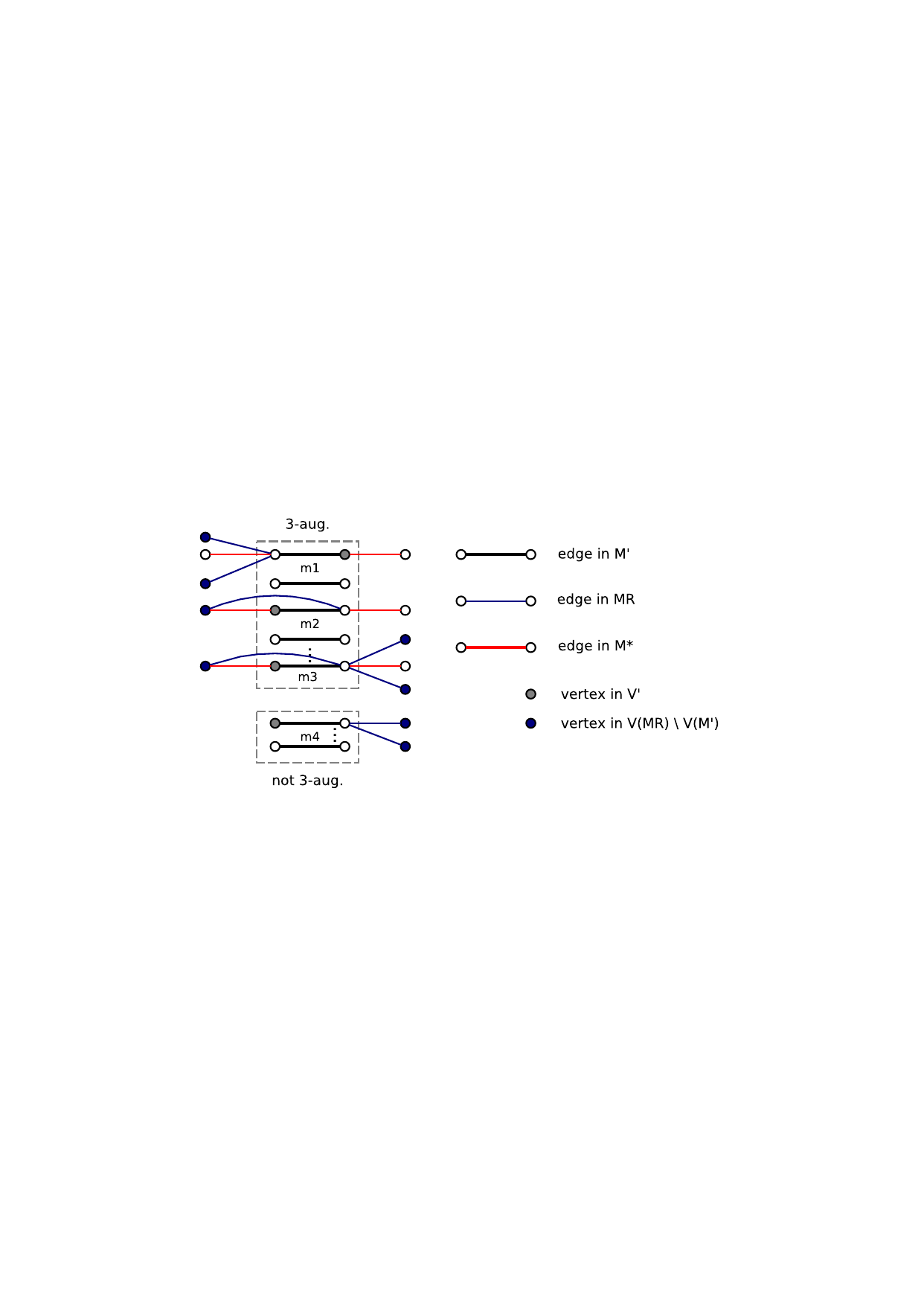}
\end{center}
\caption[Deterministic Two-pass Matching Algorithm for General Graphs]{Situation of Algorithm~\ref{algo:two-pass-det-gen} after
the first pass. $M'$ is the resulting matching that is obtained by augmenting $M$ with edges from $F$. $M_R$ is
a maximal subset of the edges of $F$ that were not used for augmenting $M$ such that vertices that are
free in $F$ with respect to $M'$ have a degree of one.
\label{figure:two-pass-det-gen} }
\end{figure}

\subsubsection{Analysis}
The analysis refers to the variables that are used in the algorithm. Furthermore, let $M^*$ denote a maximum matching
in the input graph and let $\epsilon$ be such that $|M| = (1/2 + \epsilon)|M^*|$. 
Let $\alpha = \frac{|M'|}{|M|} - 1$, or in other words, the set of disjoint $3$-augmenting paths 
found in Line~\ref{line:max-3-aug-1} is of size $\alpha |M|$. 

The analysis of the algorithm requires a lemma concerning the structure of forests. 

\begin{lemma}
 \label{lemma:forest}
 Let $T$ be a forest with at least $k$ nodes of degree at least $d$. Then:
\begin{equation*}
|T| \ge (d-1)k .  
\end{equation*}
\end{lemma}

\begin{proof}
 Consider the directed Graph $D$ that is obtained from $T$ by directing the edges from the roots of the trees of $T$ towards the leaves. 
Let $v_1, \dots, v_k$ denote the nodes that have degree at least $d$. 
Then for all $i \ne j: \Gamma_D(v_i) \cap \Gamma_D(v_j) = \varnothing$. Furthermore, for each $i: |\Gamma_D(v_i)| \ge (d-1)$. The result follows. \qed
\end{proof}


\begin{lemma}
\label{lemma:sizeS} Let $M^*$ denote a maximum matching in $G=(V, E)$. Consider the state of $F$ after the first pass. Then:
\begin{equation}
|F| \ge (b - 1) |V(M^*) \setminus V(F)|. 
\end{equation}
\end{lemma}

\begin{proof}
 By induction it is easy to see that $F$ is a forest with maximal degree $b$. 
We argue that $F$ has at least $|V(M^*) \setminus V(F)|$ nodes of degree $b$. 
The result then follows by applying Lemma~\ref{lemma:forest}.
Let $u \in V(M^*) \setminus V(F)$ and denote by $v$ the mate of $u$ in $M^*$. Since $uv$ is not taken, the degree of $v$ was already $b$ upon arrival of $uv$.
 Hence, for each node $u \in V(M^*) \setminus V(F)$ the partner $M^*(u)$ has degree $b$ in $F$. \qed
\end{proof}

\begin{lemma}
\label{lemma:freeS}
 Let $|M| = (\frac{1}{2} + \epsilon)|M^*|$. Consider the state of $F$ after the first pass. Then
\begin{equation*}
 |V(F) \setminus V(M)| \ge (1-2\epsilon - \frac{2}{b})|M^*|.
\end{equation*}
\end{lemma}

\begin{proof}
 By Lemma~\ref{lemma:sizeS}, $|F| \ge (b-1) |V(M^*) \setminus V(F)|$. Then
\begin{eqnarray}
\nonumber |V(F) \setminus V(M)| & \ge &  |V(F)| - |V(M)| \ge (b-1) |V(M^*) \setminus V(F)| - 2|M| \\
 & = & (b-1) |V(M^*) \setminus V(F)| - (1 + 2\epsilon) |M^*| . \label{eqn:3291}
\end{eqnarray}
Furthermore,we also have $|V(F)| \ge 2 |M^*| - |V(M^*) \setminus V(F)|$, and hence
\begin{eqnarray}
\nonumber |V(F) \setminus V(M)| & \ge & |V(F)| - |V(M)| \ge 2 |M^*| - |V(M^*) \setminus V(F)| - 2|M|  \\
\nonumber & = &  2 |M^*| - |V(M^*) \setminus V(F)| - (1+2\epsilon)|M^*| \\
& = &  (1-2\epsilon)|M^*| - |V(M^*) \setminus V(F)|. \label{eqn:9211}
\end{eqnarray}
Then $|V(M^*) \setminus V(F)| = \frac{2|M^*|}{b}$ minimizes $\max \{ \ref{eqn:3291}, \ref{eqn:9211} \}$ 
and we obtain $|V(F) \setminus V(M)| \ge \frac{2|M^*|}{b}$. \qed
\end{proof}

\begin{lemma} \label{lemma:large-continuation}
 Consider the state of the variables of the algorithm before the second pass. Let $M'_a \subseteq M'$
such that $\forall m \in M'_a$ there is an edge $m_R \in M_R$ and an edge $m_L \in E$ such that 
$m_R, m, m_L$ forms a $3$-augmenting path. Then:
\begin{equation*}
 |M'_a| \ge \frac{1}{b} \left( |V(M_R) \setminus V(M')| - |M'| \right) - 4(\epsilon + \frac{1}{2} \alpha + \alpha \epsilon)|M^*|.
\end{equation*}
\end{lemma}

\begin{proof}
The set $M'_a$ is precisely the subset of edges $uv$ of $M'$ that fulfill the following two conditions.
\begin{enumerate}
 \item $uv$ is $3$-augmentable, and 
 \item $uv$ has an edge of $M_R$ incident that is not a {\em blocking} edge. 
\end{enumerate}
We say that an edge $m_R = u'v \in M_R$ is a blocking edge, if $uv$ is the incident
edge of $M'$, $uu', vv'$ are the edges incident to $uv$ in $M' \oplus M^*$, and the edge
$u'v$ is not in the graph $G$. See Figure~\ref{figure:blocking-edge} for
an illustration. Note that there are at most $|M'|$ blocking edges in the graph.


We consider the vertices that are matched in $M_R$ but are free in $M'$. 
Each vertex $v \in V(M_R) \setminus V(M')$ is connected by an edge of $M_R$ to an edges of $M'$.
We remove from $V(M_R) \setminus V(M')$ these vertices that have a blocking edge incident.
There are at most $|M'|$ blocking edges. Since the maximal degree in $M_R$ is $b$, there are at least
$1/b ( |V(M_R) \setminus V(M')| - |M'|)$ edges in $M'$ that fulfill condition $(2)$. 
By Lemma~\ref{lemma:augmentable_edges}, there are at most 
$4(\epsilon + \frac{1}{2} \alpha + \alpha \epsilon)|M^*|$ edges in $M'$ that are
not $3$-augmentable, and the result follows. \qed

\end{proof}
\begin{figure}[ht]
\begin{center}
 \includegraphics[scale=0.8]{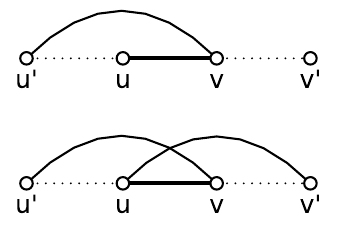}
\end{center}
\caption[Blocking Edge, Two-pass Deterministic Matching Algorithm for General Graphs]{Illustration of a blocking edge. In the first setting, the edge $u'v$ 
is a blocking edge, since the edge $uv'$ is not in the graph. The edge $u'v$ blocks edge $uu'$ from augmenting $uv$. 
In the second setting, neither $u'v$ nor $uv'$ are blocking edges. $u'v$ blocks the edge $vv'$, however, the edge $u'v$ is
an alternative for the node $v$ for being augmented. This alternative is not present in the first figure. \label{figure:blocking-edge} }
\end{figure}

\begin{theorem} \label{theorem:two-pass-det-gen}
 Algorithm~\ref{algo:two-pass-det-gen} with $b = 8$ is a deterministic $2$-pass semi-streaming algorithm for \mm~ 
with approximation ratio $1/2 + 1/140 \approx 1/2 + 0.007142$ for any graph and any arrival order.
\end{theorem}

\begin{proof}
By construction, the computed matching $M''$ is of size $|M'| + |Aug|$.  
Since $|M'| = (1+\alpha)|M|$ and $|M| = (\frac{1}{2} + \epsilon)|M^*|$, we obtain 
\begin{equation}
 |M''| = (1+\alpha) (\frac{1}{2} + \epsilon) |M^*| + |Aug|. \label{eqn:4935}
\end{equation}
 It remains to lower bound $|Aug|$.


In Lemma~\ref{lemma:large-continuation}, we show that there is a subset $M'_a \subseteq M'$ such that
\begin{equation*}
 |M'_a| \ge \frac{1}{b} \left( |V(M_R) \setminus V(M')| - |M'| \right) - 4(\epsilon + \frac{1}{2} \alpha + \alpha \epsilon)|M^*|,
\end{equation*}
and for each edge of $M'_a$ there is a $3$-augmenting path with an edge from $M_R$ and another edge from the stream. 
Any $3$-augmenting path that is added in Line~\ref{line:add-augmenting-path} of Algorithm~\ref{algo:two-pass-det-gen} to $Aug$ may block 
at most $2$ further edges of $M'_a$ from being augmented, see Figure~\ref{figure:half-augmentable}. 
We will find hence at least $\frac{1}{3} |M'_a|$ $3$-augmenting paths, and we obtain
\begin{equation} \label{eqn:3921}
 |Aug| \ge 1/3 |M'_a| \ge \frac{1}{3} \left( \frac{1}{b} \left( |V(M_R) \setminus V(M')| - |M'| \right) - 4(\epsilon + \frac{1}{2} \alpha + \alpha \epsilon)|M^*| \right) .
\end{equation}

Note that by construction, $|V(M_R) \setminus V(M')| = |V(F) \setminus V(M')|$.
We bound now $|V(F) \setminus V(M')|$. By Lemma~\ref{lemma:freeS}, $|V(F) \setminus V(M)| \ge (1 - 2\epsilon - \frac{2}{b})|M^*|$. 
Note that $M'$ is the matching that is obtained by augmenting $M$ with edges from $F$. Each augmented edge of $M$ has two
edges incident from $F$ that are used for the augmentation. Hence,
\begin{equation} \label{eqn:5222}
|V(F) \setminus V(M')| \ge (1 - 2\epsilon - \frac{2}{b})|M^*| - 2 \alpha |M| . 
\end{equation}

Using Inequality~\ref{eqn:5222} and Inequality~\ref{eqn:3921} in Inequality~\ref{eqn:4935}, we obtain

\begin{eqnarray}
 |M''| \ge \left(\frac{1}{2} + \frac{1}{6b} - \frac{1}{3}(\alpha \epsilon + \epsilon + \frac{\alpha}{2}) - \frac{1}{b} (\alpha \epsilon + \epsilon + \frac{\alpha}{2} + \frac{2}{3b})   \right)|M^*|. \label{eqn:4931}
\end{eqnarray}

Note that we also have
\begin{equation}
 |M''| \ge |M'| \ge |M^*| (\frac{1}{2} + \epsilon + \frac{\alpha}{2} + \alpha \epsilon). \label{eqn:4932}
\end{equation}

We determine $\epsilon_0$ as a function of $\alpha$ and $b$ that minimizes the maximum of the right sides of 
Inequality~\ref{eqn:4931} and Inequality~\ref{eqn:4932}. For any $\alpha$ and $\epsilon_0$, $M''$ is maximized by setting 
$b = 8$. This leads to an approximation factor $1/2 + 1/140 \approx 1/2 + 0.007142$. \qed
\end{proof}

\begin{figure}[ht]
\begin{center}
 \includegraphics[scale=0.8]{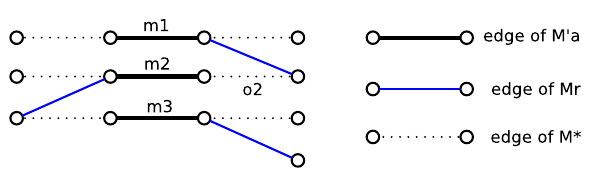}
\end{center}
\caption[An Augmenting Path blocks at most Two Augmentable Edges]{$m_1, m_2, m_3$ have each an edge of $M_R$ incident and can
be augmented with this edge and an incident edge from $M^*$. If $m_2$ is augmented with its incident edge from $M_R$ and
$o_2$, then this may prevent $m_1$ and $m_3$ from being augmented. \label{figure:half-augmentable} }
\end{figure}

\bibliography{kmm13.bib}

\end{document}